\documentclass[lettersize,journal]{IEEEtran}

\usepackage{amsmath,amsfonts}
\usepackage{amsthm}
\usepackage{algorithmic}
\usepackage{algorithm}
\usepackage{array}
\usepackage[caption=false,font=footnotesize]{subfig}
\usepackage{textcomp}
\usepackage{stfloats}
\usepackage{url}
\usepackage{verbatim}
\usepackage{graphicx}
\usepackage{cite}
\usepackage{float}
\usepackage{booktabs}
\usepackage[table]{xcolor}
\usepackage{multirow}
\usepackage{threeparttable}
\usepackage{makecell}

\newtheorem{theorem}{Theorem}
\newtheorem{assumption}{Assumption}
\newtheorem{lemma}[theorem]{Lemma}

\begin{document}

\title{Windowed and Quantized Group-Based ADMM for Distributed Optimization in Heterogeneous Edge Networks}

\author{Gaiguo Wei,
        Qingying Zhang,
        Heqiang Wang,
        Yu Zhang,
        and Xiaoxiong Zhong%
\thanks{This work was supported by Pengcheng Laboratory Project under Grant PCL2025A13.}
\thanks{Gaiguo Wei and Qingying Zhang are with the Southern University of Science and Technology, Shenzhen 518055, China, and also with Pengcheng Laboratory, Shenzhen 518000, China (e-mail: 12447019@mail.sustech.edu.cn; 12547037@mail.sustech.edu.cn).}%
\thanks{Heqiang Wang and Xiaoxiong Zhong are with Pengcheng Laboratory, Shenzhen 518000, China (e-mail: wanghq02@pcl.ac.cn; xixzhong@gmail.com).}%
\thanks{Yu Zhang is with the Department of Computer Science and Engineering, Southern University of Science and Technology, Shenzhen 518055, China (e-mail: zhangy7@sustech.edu.cn).} 
\thanks{Corresponding authors: Yu Zhang and Xiaoxiong Zhong.}}
\maketitle

\begin{abstract}
Distributed optimization in edge networks is constrained by heterogeneous client computing capabilities and limited communication resources. We propose the Windowed and Quantized Group-Based Alternating Direction Method of Multipliers (WQ-GADMM) to coordinate group updates under limited activation capacity and reduce communication costs. Clients are grouped by estimated computation time. Each window activates a limited number of groups per round, and the cloud updates the global model after all groups have updated once. The method quantizes both downlink and uplink model exchanges to reduce communication costs and allows bounded model staleness and inexact proximal local updates. For smooth nonconvex objectives, we establish an average squared Karush-Kuhn-Tucker residual bound under the stated assumptions and parameter conditions. The bound consists of a term that decreases with the iteration count and a quantization-dependent error term. Experiments on MNIST and CIFAR-10 show that 12-bit communication reduces communication volume and simulated wall-clock time while maintaining test accuracy comparable to full precision. The 12-bit configuration also maintains complete group coverage and achieves shorter mean group inter-completion gaps than the evaluated baselines.
\end{abstract}

\begin{IEEEkeywords}
Distributed optimization, heterogeneous edge networks, alternating direction method of multipliers (ADMM), quantized communication.
\end{IEEEkeywords}

\section{Introduction}
\label{sec:introduction}

\IEEEPARstart{E}{dge} networks support collaborative model training across clients with locally collected data. Federated learning enables this collaboration through model exchanges while keeping raw data at the clients \cite{McMahan2017}. However, differences in computing capability, data volume, and communication conditions lead to unequal update completion times \cite{Kairouz2021}. These differences affect both local training and the collection of model updates. A cloud-edge-client architecture organizes collaboration through intermediate edge servers, each coordinating a group of clients while the cloud maintains the global model. Hierarchical federated learning uses this structure to coordinate local updates and reduce direct client-cloud communication \cite{Liu2020HFL}. Although this hierarchy organizes computation at the group level, efficient training still requires coordination of group participation, model exchanges, and global updates.

Collaborative training in this architecture can be formulated as a distributed optimization problem. Each group objective combines the losses of its member clients, and consensus constraints link the group models to a global model. The alternating direction method of multipliers (ADMM) decomposes this problem into local subproblems and coordinates their solutions through global and dual updates \cite{Boyd2011}. Its distributed execution depends on the resources available for group updates. We consider a setting in which only a limited number of edge groups can be activated per physical communication round. When the group count exceeds this limit, collecting a contribution from every group requires multiple rounds. The update procedure must therefore specify which groups participate in each round and when the cloud updates the global model.

Beyond coordinating group participation, distributed ADMM must also address communication and computation costs. Each group exchanges model information with the cloud and solves a local subproblem. Quantization reduces the transmitted model payload, while approximate local solutions can reduce the computational effort required for each update. Both introduce approximation errors into the optimization process. If groups use cached global models, their local updates also depend on information from earlier iterations. The analysis must therefore account for model staleness together with quantization and local solution errors.

Existing methods address these issues through different mechanisms. Parallel Group-Based ADMM (P-GADMM) forms computation-aware groups and uses bounded asynchronous coordination to accommodate unequal group completion times \cite{Wei2026PGADMM}. Quantized Group ADMM (Q-GADMM) combines group coordination with quantized model exchanges to reduce communication in decentralized learning \cite{Elgabli2020QGADMM}. Inexact ADMM allows approximate local solutions and has been applied to federated learning \cite{Zhou_InexactADMM}. These methods address group coordination, communication reduction, and local computation under different update rules. Our work considers these mechanisms together under limited activation capacity, with one update from every group before each global update.

To address these challenges, we propose Windowed and Quantized Group-Based ADMM (WQ-GADMM). Clients are assigned to fixed edge groups according to their estimated computation times. Each activation window spans the physical rounds needed for all groups to complete one update. In each round, the scheduling rule selects groups that have not yet updated, subject to the activation limit. The global model remains fixed throughout the window, and the cloud performs the global and dual updates after all groups finish. Each window therefore corresponds to one logical ADMM iteration. Bidirectional quantization reduces the payload of downlink references and uplink models. Local updates use cached references with bounded model staleness and approximate solutions to proximal subproblems. Fig.~\ref{fig:design_overview} summarizes the design.

\begin{figure}[t]
\centering
\includegraphics[width=\linewidth]{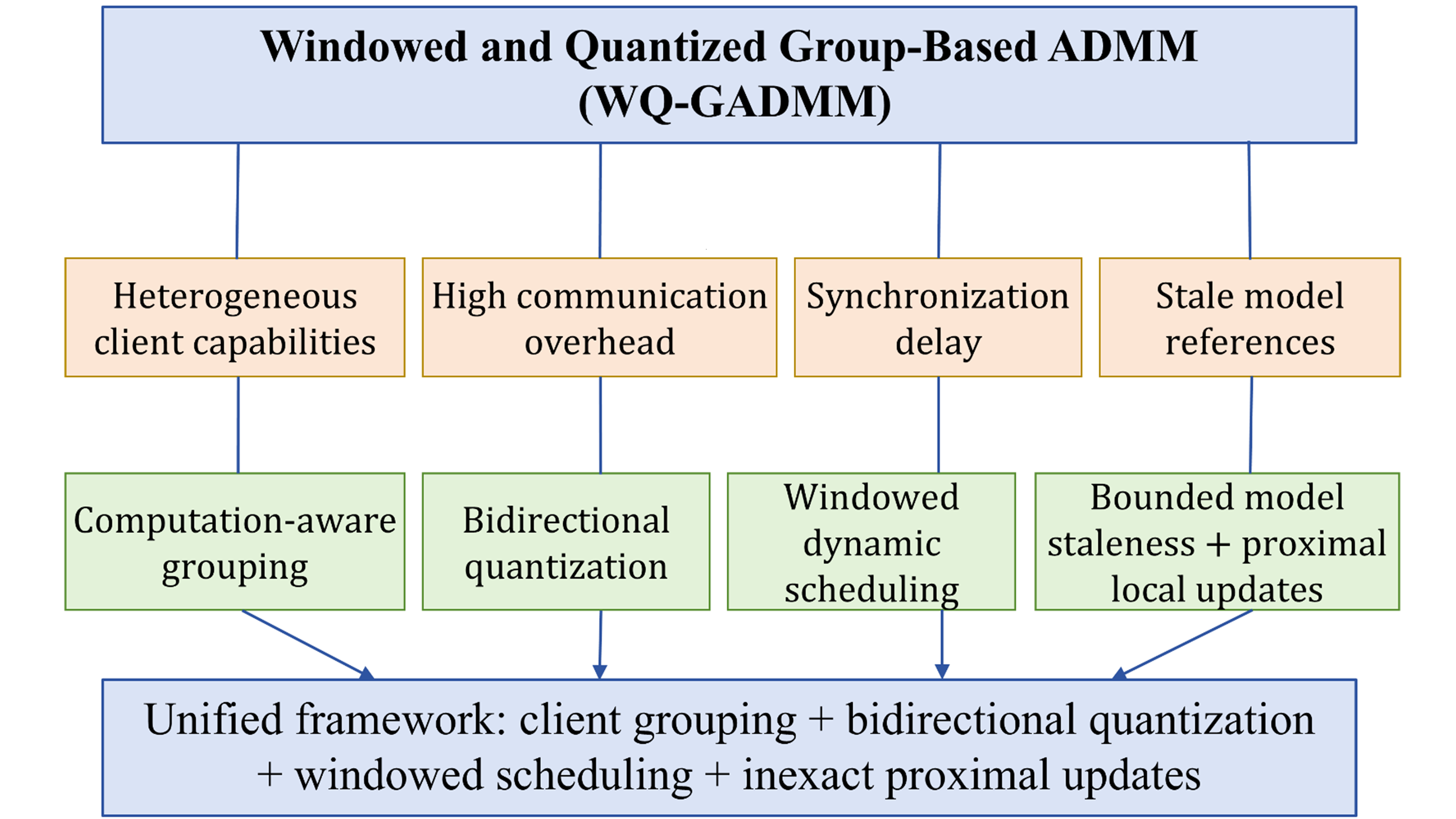}
\caption{Main challenges in heterogeneous edge networks and the corresponding components of WQ-GADMM.}
\label{fig:design_overview}
\end{figure}

The main contributions of this work are as follows.
\begin{enumerate}

\item We propose WQ-GADMM to coordinate group updates and reduce communication costs in heterogeneous edge networks. Clients are divided into fixed groups based on estimated computation times. Within each activation window, a limited number of groups update per communication round, and the cloud updates the global model after every group has updated once. The method uses bidirectional quantization and allows bounded model staleness and inexact local updates.

\item We establish an average squared Karush-Kuhn-Tucker (KKT) residual bound for smooth nonconvex objectives under the stated assumptions and parameter conditions. The bound accounts for model staleness and inexact local updates, and consists of an $O(1/K)$ term and a quantization-dependent error term. It guarantees a bounded average residual in the presence of these errors.

\item We evaluate WQ-GADMM on a smooth nonconvex test problem and the MNIST and CIFAR-10 datasets, examining residual behaviour, learning performance, group participation, and resource efficiency. Under the same training workload, the 12-bit configuration (Q12) reduces communication volume by approximately 62.5\% compared with the 32-bit floating-point configuration (FP32), while achieving comparable test accuracy and lower simulated wall-clock time. Q12 also maintains complete group coverage and achieves shorter group inter-completion gaps than the baselines.

\end{enumerate}

The remainder of this paper is organized as follows. Section~\ref{sec:related_work} reviews related work, and Section~\ref{sec:system_model} presents the system model and problem formulation. Section~\ref{sec:proposed_algorithm} describes WQ-GADMM, and Section~\ref{sec:convergence} provides its convergence analysis. Section~\ref{sec:experiments} reports the experimental results, followed by the conclusion in Section~\ref{sec:conclusion}.

\section{Related Work}
\label{sec:related_work}

\subsection{Distributed Optimization in Edge Networks}

Distributed optimization coordinates local computation and information exchange across participants to minimize a global objective. Distributed subgradient and dual averaging methods provide convergence guarantees for this process \cite{Nedic2009,Duchi2012}, while communication-efficient statistical optimization studies its computational and communication costs \cite{Zhang2013}. In heterogeneous edge networks, local data volumes affect computational workloads, and available resources influence update completion times. FedProx addresses data and system heterogeneity through proximal local objectives and flexible amounts of local computation \cite{Li2020FedProx}. These studies provide a basis for coordinating local optimization under heterogeneous data and resource conditions.

Client organization and selection offer additional ways to address heterogeneity. Hierarchical federated learning coordinates updates across network levels \cite{Abad2020}, while adaptive clustering groups clients according to their data characteristics \cite{Tian2022}. Client selection can also account for participation fairness \cite{Shi2023} or training efficiency. For example, Oort combines data utility with execution speed when selecting clients \cite{Lai2021Oort}. These approaches address heterogeneity through network hierarchy, client grouping, and participant selection.

\subsection{ADMM and Inexact Local Optimization}

ADMM has been applied to distributed learning and optimization in edge networks \cite{He2025}, and its convergence has been studied beyond convex objectives. Hong et al. analyze convergence for nonconvex consensus and sharing problems under suitable assumptions and penalty parameters \cite{Hong2016NonconvexADMM}. Wang et al. establish sufficient conditions for convergence in a class of nonconvex, nonsmooth problems with coupled linear equality constraints \cite{Wang2019NonconvexADMM}. In distributed implementations, solving local subproblems exactly can be computationally expensive, motivating inexact updates. Proximal point and operator splitting methods provide a theoretical foundation for controlling the resulting approximation errors \cite{Eckstein1992DRS}. Chang et al. use proximal gradient updates to reduce local solution costs \cite{Chang2015InexactCADMM}, while stochastic ADMM uses sampled data for local updates \cite{Ouyang_StochasticADMM}. Inexact ADMM and three-operator ADMM have also been developed for federated learning \cite{Zhou_InexactADMM,Kant_ThreeOpADMM}.

In addition to local computation, synchronization affects the execution time of distributed ADMM. Zhang and Kwok combine a partial barrier with bounded delay to coordinate asynchronous updates \cite{Zhang2014Async}. Chang et al. analyze asynchronous distributed ADMM for possibly nonconvex objectives and derive convergence conditions that depend on the maximum network delay \cite{Chang2016Async}. These methods allow global updates using fresh results from a subset of participants while limiting update delays.

\subsection{Quantized and Group-Based ADMM}

Quantized consensus methods examine how finite communication precision affects agreement among participants \cite{Kashyap2007QuantizedConsensus}. Quantized consensus ADMM incorporates quantized exchanges into distributed optimization and analyzes the resulting errors \cite{Zhu2016QC_ADMM}. For asynchronous ADMM, Shrestha combines bidirectional compression with error feedback to reduce communication costs \cite{Shrestha2025AsyncQADMM}. Related approaches address communication costs in distributed stochastic gradient training. Quantized stochastic gradient descent (QSGD) characterizes the tradeoff between transmitted bits and quantization variance \cite{Alistarh2017QSGD}, while DoubleSqueeze applies compression with error compensation at both workers and the parameter server \cite{Tang2019DoubleSqueeze}. These studies explore communication reduction through quantization and compression across different optimization methods and update schemes.

Group-based methods organize local updates and model exchanges, providing a complementary approach to quantization. Group-based ADMM uses a group structure for distributed classification \cite{Wang2017GADMM}. Group ADMM (GADMM) and its extensions explore group coordination, layer-wise updates, and different communication patterns \cite{Elgabli2020GADMM,Elgabli2020LFGADMM,Huang2021GRADMM,BenIssaid2022}. Q-GADMM combines group coordination with quantized exchanges for decentralized learning \cite{Elgabli2020QGADMM}, while P-GADMM uses computation-aware groups and bounded asynchronous cloud updates \cite{Wei2026PGADMM}. WQ-GADMM considers a hierarchical edge network in which only a limited number of groups can be activated per physical round. It uses fixed groups and activation windows, with each group completing one update before the cloud updates the global model. Both downlink references and uplink models are quantized to reduce the transmitted model payload.

\section{System Model and Problem Formulation}
\label{sec:system_model}

We consider a heterogeneous edge network with clients, edge servers, and a cloud server. Clients retain their data locally and collaborate in model training. Edge servers coordinate client groups, while the cloud maintains the global model and coordinates group updates. Clients differ in their data volumes, computing resources, and communication capabilities. 

\subsection{Client Model and Fixed Group Partition}

Let $\mathcal C=\{1,2,\ldots,M\}$ denote the client set. Client $i$ holds a local dataset $D_i$ containing $n_i=|D_i|$ samples. Its empirical loss is
\begin{equation}
F_i(w)
=\frac{1}{n_i}\sum_{\xi\in D_i}\ell_i(w;\xi),
\label{eq:local_risk}
\end{equation}
where $w\in\mathbb R^d$ is the model parameter vector and $\ell_i(w;\xi)$ is the loss on sample $\xi$.

Before training, the clients are partitioned into $G$ nonempty, disjoint groups $\{\mathcal C_g\}_{g=1}^{G}$ satisfying
\begin{equation}
\begin{aligned}
&\mathcal C_g\neq\emptyset,
\qquad
\bigcup_{g=1}^{G}\mathcal C_g=\mathcal C,\\
&\mathcal C_g\cap\mathcal C_{g'}=\emptyset,
\qquad g\neq g'.
\end{aligned}
\label{eq:fixed_group_partition}
\end{equation}
The grouping procedure in Section~\ref{subsec:grouping_strategy} uses estimated client computation times. Group membership remains fixed throughout training, while WQ-GADMM determines which groups are activated in each physical round.

Let $n_g=\sum_{i\in\mathcal C_g}n_i$ and $n=\sum_{g=1}^{G}n_g$ denote the group and total sample counts, respectively. The objective of group $g$ is
\begin{equation}
\phi_g(w_g)
=\sum_{i\in\mathcal C_g}\frac{n_i}{n}F_i(w_g).
\label{eq:group_objective}
\end{equation}
Equivalently, $\phi_g(w_g)=(n_g/n)\bar F_g(w_g)$, where $\bar F_g(w_g)=\sum_{i\in\mathcal C_g}(n_i/n_g)F_i(w_g)$ is the group's sample-averaged loss. This definition preserves the sample weights in the global objective:
\begin{equation}
\sum_{g=1}^{G}\phi_g(w)
=\sum_{i=1}^{M}\frac{n_i}{n}F_i(w).
\label{eq:global_objective_equivalence}
\end{equation}

\subsection{Group Consensus Formulation}

Each edge group maintains a model $w_g\in\mathbb R^d$, and the cloud maintains a global model $w\in\mathbb R^d$. Collaborative training is formulated as
\begin{equation}
\begin{aligned}
\underset{\{w_g\}_{g=1}^{G},\,w}{\operatorname{minimize}}
\quad&
\sum_{g=1}^{G}\phi_g(w_g)\\
\operatorname{subject\ to}
\quad&
w_g=w,\qquad g=1,\ldots,G.
\end{aligned}
\label{eq:consensus_problem}
\end{equation}
The consensus constraints require all group models to agree with the global model.

Let $\lambda_g$ be the dual variable associated with $w_g=w$. For a fixed ADMM penalty parameter $\rho>0$, define the scaled dual variable $u_g=\lambda_g/\rho$, so that
\begin{equation}
\lambda_g=\rho u_g.
\label{eq:scaled_dual_definition}
\end{equation}

For differentiable group objectives, a first-order stationary point $(w^*,\{w_g^*\}_{g=1}^{G})$ with associated scaled multipliers $\{u_g^*\}_{g=1}^{G}$ satisfies the KKT conditions
\begin{equation}
\nabla\phi_g(w_g^*)+\rho u_g^*=0,
\qquad g=1,\ldots,G,
\label{eq:kkt_stationarity}
\end{equation}
\begin{equation}
w_g^*=w^*,
\qquad g=1,\ldots,G,
\label{eq:kkt_consensus}
\end{equation}
and
\begin{equation}
\sum_{g=1}^{G}u_g^*=0.
\label{eq:kkt_dual_balance}
\end{equation}
These conditions provide the basis for the KKT residual used in Section~\ref{sec:convergence}.

\section{Windowed and Quantized Group-Based ADMM}
\label{sec:proposed_algorithm}

WQ-GADMM addresses the consensus problem in \eqref{eq:consensus_problem} through fixed client groups, windowed scheduling, and bidirectional quantization. Each logical iteration spans an activation window, with at most $M_a$ groups activated per physical communication round. Every group completes one update before the cloud updates the global model and dual variables.

Figure~\ref{fig:workflow} illustrates the procedure. An activated group receives a quantized reference, approximately solves a proximal subproblem, and uploads a quantized model. The global model and dual variables remain fixed within each window. We describe group construction, scheduling, local computation, and cloud updates below.

\begin{figure*}[t]
\centering
\includegraphics[width=\textwidth]{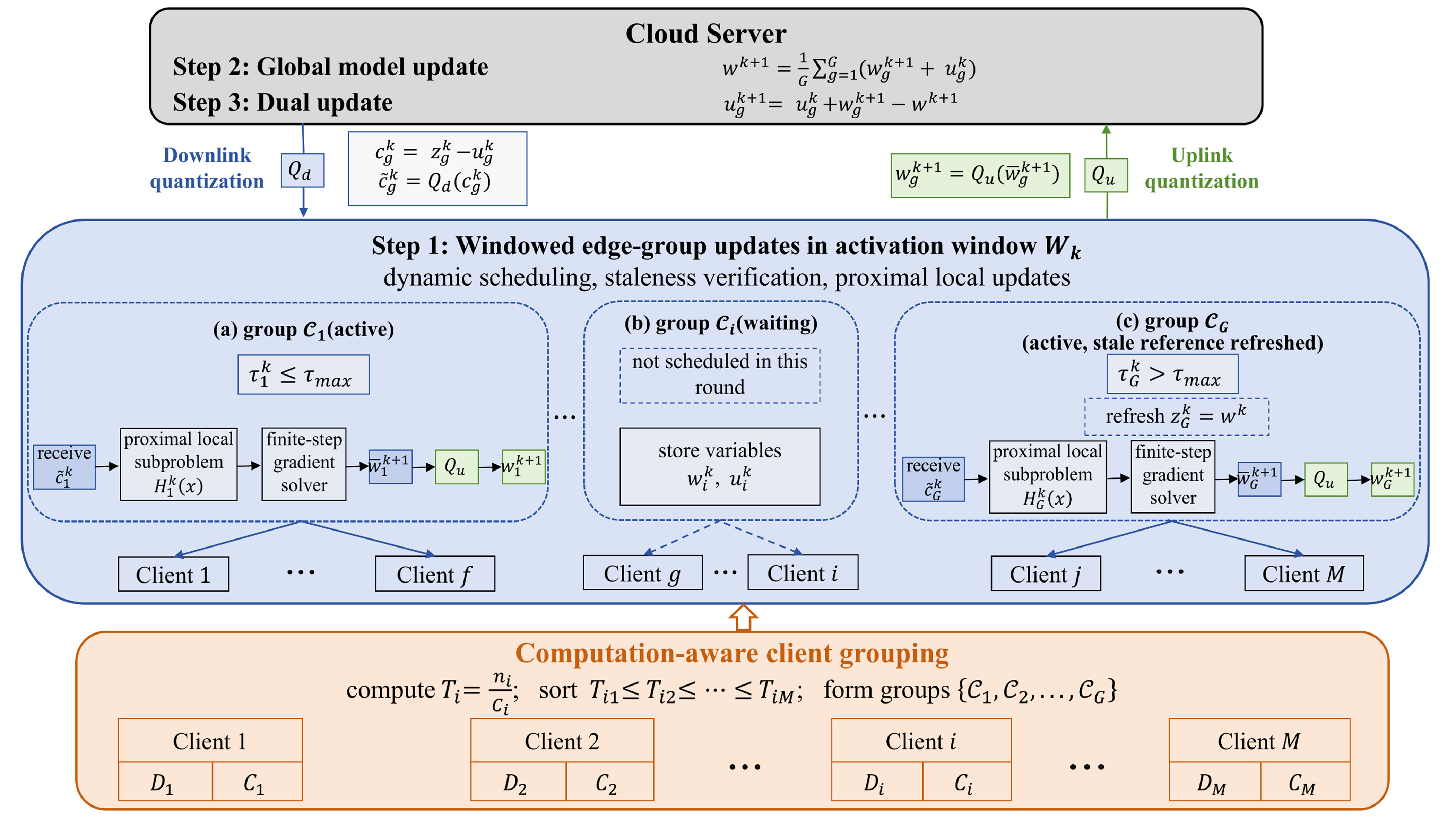}
\caption{Workflow of WQ-GADMM. Groups perform local updates over multiple physical rounds using quantized downlink references and uplink models. The cloud updates the global model and dual variables after every group has contributed once.}
\label{fig:workflow}
\end{figure*}

\subsection{Computation-Aware Grouping}
\label{subsec:grouping_strategy}

Before training, clients are partitioned into $G$ fixed groups according to their estimated computation times, where $1\le G\le M$. Let $C_i>0$ denote the effective sample processing rate of client $i$, measured in samples per second. The estimated computation time used for grouping is
\begin{equation}
T_i=\frac{n_i}{C_i}.
\label{eq:estimated_client_latency}
\end{equation}
Let $(i_1,\ldots,i_M)$ be a permutation satisfying
\begin{equation}
T_{i_1}\le T_{i_2}\le\cdots\le T_{i_M}.
\label{eq:latency_sorted_clients}
\end{equation}
Define $b_g=\lfloor gM/G\rfloor$ for $g=0,\ldots,G$. Group $g$ is
\begin{equation}
\mathcal C_g
=\{i_j:b_{g-1}<j\le b_g\},
\qquad g=1,\ldots,G.
\label{eq:computation_aware_groups}
\end{equation}
This construction groups clients with similar estimated computation times. Group membership remains fixed throughout training.

\subsection{Windowed Group Scheduling}

Let $k$ denote the logical iteration index and $r$ the physical communication round index. The set $\Omega_k$ contains groups that have completed their updates in the current window. At the start of each window, $\Omega_k=\emptyset$ and all waiting counters are reset to zero. The eligible groups are
\begin{equation}
\mathcal E_k(r)
=\{1,\ldots,G\}\setminus\Omega_k.
\label{eq:eligible_set}
\end{equation}
At most $M_a\in\{1,\ldots,G\}$ groups can be activated in each round.

For each eligible group, $s_g^r$ counts the preceding rounds in the current window in which it was not selected. Given an integer threshold $T_{\mathrm{act}}\ge1$, define
\begin{align}
\widehat U_k(r)
&=\{g\in\mathcal E_k(r):s_g^r\ge T_{\mathrm{act}}-1\},
\label{eq:forced_candidate_set}\\
\widehat M_r
&=\min\{M_a,|\widehat U_k(r)|\},
\label{eq:forced_activation_number}\\
U_k(r)
&=\operatorname{Top}_{\widehat M_r}
\bigl(\widehat U_k(r),s_g^r\bigr).
\label{eq:forced_set}
\end{align}
Here, $\operatorname{Top}_{\widehat M_r}(\widehat U_k(r),s_g^r)$ selects the $\min\{\widehat M_r,|\widehat U_k(r)|\}$ groups in $\widehat U_k(r)$ with the largest waiting counts.

All eligible groups have the same waiting count. Their counters are reset to zero at the start of each window and increase by one in each round in which they are not selected. They therefore reach the threshold in the same round, with ties resolved in ascending order of group index.

The remaining candidates and available slots are
\begin{align}
R_k(r)
&=\mathcal E_k(r)\setminus U_k(r),
\label{eq:remaining_set}\\
M_r
&=\min\{M_a-|U_k(r)|,\;|R_k(r)|\}.
\label{eq:remaining_capacity}
\end{align}
For each remaining candidate, compute
\begin{equation}
S_g^r
=\omega_1
\frac{\|w_g^k-w^k\|^2}
{\max\{\|w^k\|^2,\epsilon_s\}}
+\omega_2s_g^r,
\label{eq:scheduling_score}
\end{equation}
where $\omega_1,\omega_2\ge0$ are fixed weights for model disagreement and waiting count, respectively, and $\epsilon_s>0$ is a constant that prevents division by zero. Since all eligible groups have the same waiting count, the waiting term does not affect their ranking. Before the threshold is reached, groups are therefore ranked by their normalized model disagreement when $\omega_1>0$.

The remaining slots are filled by
\begin{equation}
B_k(r)
=\operatorname{Top}_{M_r}\bigl(R_k(r),S_g^r\bigr),
\label{eq:score_selected_set}
\end{equation}
giving the active set
\begin{equation}
A_k(r)=U_k(r)\cup B_k(r),
\qquad |A_k(r)|\le M_a.
\label{eq:active_set}
\end{equation}

Groups in $A_k(r)$ execute their updates concurrently. After all active groups complete, the cloud records
\begin{equation}
\Omega_k\leftarrow\Omega_k\cup A_k(r).
\label{eq:completed_set_update}
\end{equation}
The waiting counters are updated using the eligible set defined at the start of round $r$:
\begin{equation}
s_g^{r+1}=
\begin{cases}
s_g^r+1,
&g\in\mathcal E_k(r)\setminus A_k(r),\\
0,
&\text{otherwise}.
\end{cases}
\label{eq:waiting_time_update}
\end{equation}

The activation window is
\begin{equation}
\mathcal W_k
=\{r_k,r_k+1,\ldots,r_{k+1}-1\},
\label{eq:activation_window}
\end{equation}
where $r_k$ is its first round and $r_{k+1}$ is the first round of the next window. The window ends after all groups have completed their updates:
\begin{equation}
\bigcup_{r\in\mathcal W_k}A_k(r)=\{1,\ldots,G\}.
\label{eq:window_coverage}
\end{equation}
Completed groups are excluded from subsequent rounds, ensuring exactly one update per group within each window. Each round activates as many eligible groups as the capacity allows. Therefore, each window contains $\lceil G/M_a\rceil$ physical rounds, provided all activated updates finish.

\subsection{Quantized Local Updates with Bounded Staleness}

The cloud initializes $z_g^0=w^0$ and $d_g^0=0$ for all groups. At the start of iteration $k\ge1$, it retains the cached model and its index from the previous iteration. The cached global model is
\begin{equation}
z_g^k=w^{d_g^k},
\label{eq:cached_reference}
\end{equation}
where $d_g^k$ denotes the logical iteration index of the cached global model. When group $g$ is activated, the cloud computes the model staleness as
\begin{equation}
\tau_g^k=k-d_g^k.
\label{eq:model_staleness}
\end{equation}
For a fixed nonnegative integer $\tau_{\max}$, the cache is refreshed whenever $\tau_g^k>\tau_{\max}$:
\begin{equation}
z_g^k\leftarrow w^k,
\qquad
d_g^k\leftarrow k,
\qquad
\tau_g^k\leftarrow0.
\label{eq:reference_refresh}
\end{equation}
After the refresh check, the model staleness satisfies
\begin{equation}
0\le\tau_g^k\le\tau_{\max}.
\label{eq:bounded_staleness}
\end{equation}

The cloud constructs the downlink reference
\begin{equation}
c_g^k=z_g^k-u_g^k
\label{eq:downlink_reference}
\end{equation}
and sends its quantized version to group $g$:
\begin{equation}
\widetilde c_g^k
=Q_d(c_g^k)
=c_g^k+\varepsilon_{d,g}^k,
\label{eq:downlink_quantization}
\end{equation}
where $Q_d$ is the downlink quantizer and $\varepsilon_{d,g}^k$ is the downlink quantization error.

Using the received reference $\widetilde c_g^k$, group $g$ approximately minimizes the local objective
\begin{equation}
H_g^k(x)
=\phi_g(x)
+\frac{\rho}{2}\|x-\widetilde c_g^k\|^2
+\frac{\eta}{2}\|x-w_g^k\|^2,
\label{eq:local_subproblem}
\end{equation}
where $\eta>0$ is the proximal parameter. The gradient of the group objective is
\begin{equation}
\nabla\phi_g(x)
=\sum_{i\in\mathcal C_g}\frac{n_i}{n}\nabla F_i(x).
\label{eq:distributed_group_gradient}
\end{equation}
The gradient of the local objective is therefore
\begin{equation}
\nabla H_g^k(x)
=\nabla\phi_g(x)
+\rho(x-\widetilde c_g^k)
+\eta(x-w_g^k).
\label{eq:local_objective_gradient}
\end{equation}

Starting from
\begin{equation}
x_g^{k,0}=w_g^k,
\label{eq:local_initialization}
\end{equation}
the group performs gradient updates
\begin{equation}
x_g^{k,m+1}
=x_g^{k,m}-\zeta_g\nabla H_g^k(x_g^{k,m}),
\qquad m=0,1,\ldots.
\label{eq:local_gd}
\end{equation}
Let $L_g$ be a Lipschitz constant of $\nabla\phi_g$. Choose $\rho,\eta>0$ with $\rho+\eta>L_g$, and set
\begin{equation}
0<\zeta_g\le\frac{1}{L_g+\rho+\eta}.
\label{eq:local_stepsize}
\end{equation}
For a tolerance $\theta_g>0$, let $M_g(k)$ be the first index $m\ge0$ satisfying
\[
\|\nabla H_g^k(x_g^{k,m})\|
\le\theta_g\|x_g^{k,m}-w_g^k\|.
\]
The unquantized output is
\begin{equation}
\bar w_g^{k+1}=x_g^{k,M_g(k)}.
\label{eq:unquantized_local_model}
\end{equation}
Define its local optimality residual as
\begin{equation}
e_g^{k+1}=\nabla H_g^k(\bar w_g^{k+1}).
\label{eq:local_optimality_residual}
\end{equation}
The stopping criterion ensures
\begin{equation}
\|e_g^{k+1}\|
\le\theta_g\|\bar w_g^{k+1}-w_g^k\|.
\label{eq:local_residual_bound}
\end{equation}

The group uploads the quantized model
\begin{equation}
w_g^{k+1}
=Q_u(\bar w_g^{k+1})
=\bar w_g^{k+1}+\varepsilon_{u,g}^{k+1},
\label{eq:uplink_quantization}
\end{equation}
where $Q_u$ is the uplink quantizer and $\varepsilon_{u,g}^{k+1}$ is the quantization error. 

\subsection{Cloud Update and Overall Algorithm}

After all groups complete their updates within $\mathcal W_k$, the cloud computes
\begin{equation}
w^{k+1}
=\frac1G\sum_{g=1}^{G}(w_g^{k+1}+u_g^k)
\label{eq:global_update}
\end{equation}
and updates the scaled dual variables:
\begin{equation}
u_g^{k+1}
=u_g^k+w_g^{k+1}-w^{k+1},
\qquad g=1,\ldots,G.
\label{eq:dual_update_scaled}
\end{equation}
The group consensus residual is
\begin{equation}
r_g^{k+1}=w_g^{k+1}-w^{k+1}.
\label{eq:consensus_residual}
\end{equation}
With the fixed penalty parameter $\rho$ and $\lambda_g^k=\rho u_g^k$ from \eqref{eq:scaled_dual_definition}, the dual increments satisfy
\begin{equation}
\begin{aligned}
u_g^{k+1}-u_g^k&=r_g^{k+1},\\
\lambda_g^{k+1}-\lambda_g^k&=\rho r_g^{k+1}.
\end{aligned}
\label{eq:dual_residual_relation}
\end{equation}

\begin{algorithm}[!t]
\caption{Windowed and Quantized Group-Based ADMM (WQ-GADMM)}
\label{alg:wq_gadmm}
\begin{algorithmic}[1]

\REQUIRE Client profiles $\{(C_i,n_i)\}_{i=1}^{M}$; number of groups $G$; initial model $w^0$; iteration limit $K$.
\REQUIRE Parameters $\rho,\eta,M_a,T_{\mathrm{act}},\tau_{\max}$; scheduling weights $\omega_1,\omega_2$; score stabilizer $\epsilon_s$; quantizers $Q_d,Q_u$; local stepsizes $\{\zeta_g\}_{g=1}^{G}$ and tolerances $\{\theta_g\}_{g=1}^{G}$.

\STATE Form fixed client groups based on estimated computation times, following Section~\ref{subsec:grouping_strategy}.
\STATE Initialize $w_g^0=z_g^0=w^0$, $u_g^0=0$, and $d_g^0=0$ for all $g$; set $r=0$.

\FOR{$k=0,1,\ldots,K-1$}
    \IF{$k>0$}
        \STATE Retain the cached models and indices: $(z_g^k,d_g^k)\leftarrow(z_g^{k-1},d_g^{k-1})$ for all $g$.
    \ENDIF
    \STATE Reset the completed-group set $\Omega_k\leftarrow\emptyset$ and waiting counters $s_g^r\leftarrow0$.

    \WHILE{$\Omega_k\neq\{1,\ldots,G\}$}
        \STATE Select active groups $A_k(r)$ using \eqref{eq:eligible_set}-\eqref{eq:active_set}.

        \FOR{each $g\in A_k(r)$ in parallel}
            \STATE Compute the cache age $\tau_g^k=k-d_g^k$.
            \IF{$\tau_g^k>\tau_{\max}$}
                \STATE Refresh the cached model and index: $z_g^k\leftarrow w^k$, $d_g^k\leftarrow k$, and $\tau_g^k\leftarrow0$.
            \ENDIF
            \STATE Transmit the quantized reference $\widetilde c_g^k=Q_d(z_g^k-u_g^k)$ to group $g$.
            \STATE Initialize the local iterate $x_g^{k,0}=w_g^k$.
            \STATE Apply the gradient update in \eqref{eq:local_gd} until the stopping criterion defining $M_g(k)$ is satisfied.
            \STATE Set $\bar w_g^{k+1}=x_g^{k,M_g(k)}$ and transmit $w_g^{k+1}=Q_u(\bar w_g^{k+1})$ to the cloud.
        \ENDFOR

        \STATE Update the waiting counters using \eqref{eq:waiting_time_update}.
        \STATE Add the completed groups to $\Omega_k\leftarrow\Omega_k\cup A_k(r)$ and increment $r\leftarrow r+1$.
    \ENDWHILE

    \STATE Update the global model using \eqref{eq:global_update} and the dual variables using \eqref{eq:dual_update_scaled}.
\ENDFOR

\STATE \textbf{return} $w^K$.
\end{algorithmic}
\end{algorithm}

The complete procedure of WQ-GADMM is summarized in Algorithm~\ref{alg:wq_gadmm}.

\section{Convergence Analysis}
\label{sec:convergence}

In this section, we establish an average squared KKT residual bound for WQ-GADMM with smooth nonconvex objectives. The bound consists of an $O(1/K)$ term and a quantization-dependent error term.

\subsection{Assumptions and Notation}

We consider Algorithm~\ref{alg:wq_gadmm} with its stated initialization, window updates, and cache refresh rule. The local solver follows \eqref{eq:local_gd} and satisfies the stopping criterion in \eqref{eq:local_residual_bound}.

\begin{assumption}[Smooth objectives]
\label{assu:nonconvex_smooth}
Each group objective $\phi_g:\mathbb R^d\to\mathbb R$ is differentiable and satisfies
\[
\|\nabla\phi_g(x)-\nabla\phi_g(y)\|
\le L_g\|x-y\|,
\qquad x,y\in\mathbb R^d.
\]
The objectives may be nonconvex, and their sum is bounded below:
\[
F(w):=\sum_{g=1}^G\phi_g(w)\ge F_{\inf}>-\infty,
\qquad w\in\mathbb R^d.
\]
\end{assumption}

\begin{assumption}[Quantization errors]
\label{assu:bounded_quantization}
For every group and iteration, the quantization errors satisfy
\begin{equation}
\mathbb E\!\left[
\|\varepsilon_{d,g}^k\|^2
+\|\varepsilon_{u,g}^{k+1}\|^2
\,\middle|\,\mathcal F_k
\right]
\le \sigma_{q,g}^2<\infty,
\label{eq:bounded_quant_error}
\end{equation}
where $\mathcal F_k$ contains the history before the quantized updates in iteration $k$. The constants $\sigma_{q,g}^2$ do not depend on $k$.
\end{assumption}

\begin{assumption}[Finite second moments]
\label{assu:finite_moments}
At every finite iteration $k$, the model and dual variables satisfy
\[
\mathbb E\!\left[
\|w^k\|^2+
\sum_{g=1}^G
\left(\|w_g^k\|^2+\|u_g^k\|^2\right)
\right]<\infty.
\]
\end{assumption}

Throughout this section, $\rho,\eta,\theta_g>0$, the cache threshold $\tau_{\max}$ is a finite nonnegative integer, and $0<\zeta_g\le(L_g+\rho+\eta)^{-1}$. Define
\begin{equation}
L=\max_g L_g,\qquad
\theta=\max_g\theta_g,\qquad
\sigma_q^2=\sum_{g=1}^G\sigma_{q,g}^2.
\label{eq:aggregate_quantization_bound}
\end{equation}

Let $x^k=\operatorname{col}(w_1^k,\ldots,w_G^k)$ and $\lambda^k=\rho\operatorname{col}(u_1^k,\ldots,u_G^k)$ denote the stacked group models and unscaled dual variables. The function $f(x)=\sum_g\phi_g(x_g)$ has an $L$-Lipschitz continuous gradient. For $k\ge1$, define
\begin{equation}
\begin{aligned}
s^k&=x^k-x^{k-1},
&
v^k&=w^k-w^{k-1},\\
r^k&=x^k-\mathbf1\otimes w^k,
&
S_k&=\mathbb E\|s^k\|^2.
\end{aligned}
\label{eq:nc_increments}
\end{equation}
Here, $\mathbf1\in\mathbb R^G$ is the all-ones vector, and $\otimes$ denotes the Kronecker product. The vectors $s^k$ and $v^k$ represent group and global model changes, respectively, while $r^k=\operatorname{col}(r_1^k,\ldots,r_G^k)$ is the stacked consensus residual.

\subsection{Auxiliary Lemmas}

\begin{lemma}[Termination of the local solver]
\label{lemma:local_termination}
Under Assumption~\ref{assu:nonconvex_smooth}, if $\rho+\eta>L$, every local gradient loop in Algorithm~\ref{alg:wq_gadmm} terminates after finitely many steps.
\end{lemma}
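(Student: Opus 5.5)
The plan is to show that each local subproblem $H_g^k$ is strongly convex and smooth, so that the gradient iteration \eqref{eq:local_gd} converges to the unique minimizer. I would then check that the relative stopping test $\|\nabla H_g^k(x_g^{k,m})\|\le\theta_g\|x_g^{k,m}-w_g^k\|$ must eventually hold. Throughout, fix a group $g$ and an iteration $k$. The quantities $\widetilde c_g^k$ and $w_g^k$ are fixed during the inner loop, so $H_g^k$ is a deterministic function of $x$.

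\emph{Step 1 (strong convexity and smoothness).} By Assumption~\ref{assu:nonconvex_smooth}, $\nabla\phi_g$ is $L_g$-Lipschitz with $L_g\le L$. Hence $\phi_g+\frac{L_g}{2}\|\cdot\|^2$ is convex. It follows that
\[
H_g^k=\phi_g+\tfrac{\rho}{2}\|\cdot-\widetilde c_g^k\|^2+\tfrac{\eta}{2}\|\cdot-w_g^k\|^2
\]
is $\mu_g$-strongly convex with $\mu_g=\rho+\eta-L_g\ge\rho+\eta-L>0$. From \eqref{eq:local_objective_gradient}, $\nabla H_g^k$ is $L_H$-Lipschitz with $L_H=L_g+\rho+\eta$. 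Therefore $H_g^k$ has a unique minimizer $x_g^{k,\star}$, characterized by $\nabla H_g^k(x_g^{k,\star})=0$.

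\emph{Step 2 (convergence of the inner iterates).} The step size satisfies $0<\zeta_g\le 1/L_H$ by \eqref{eq:local_stepsize}. The standard gradient descent analysis for $\mu_g$-strongly convex, $L_H$-smooth functions then gives
\[
\|x_g^{k,m}-x_g^{k,\star}\|\le(1-\zeta_g\mu_g)^{m/2}\,\|w_g^k-x_g^{k,\star}\|.
\]
This follows from the usual contraction estimate, or from the descent lemma combined with the Polyak--{\L}ojasiewicz inequality. Combining it with Lipschitz continuity of $\nabla H_g^k$ yields
\[
\|\nabla H_g^k(x_g^{k,m})\|\le L_H\|x_g^{k,m}-x_g^{k,\star}\|\to0
\]
geometrically. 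In addition, $\|x_g^{k,m}-w_g^k\|\to\|x_g^{k,\star}-w_g^k\|$.

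\emph{Step 3 (the stopping test fires).} This is the only delicate point, because the tolerance on the right-hand side is relative and could itself shrink to zero. I split into two cases.

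\textbf{Case (a): $x_g^{k,\star}=w_g^k$.} Then $\nabla H_g^k(x_g^{k,0})=\nabla H_g^k(w_g^k)=0$. The test holds at $m=0$ as $0\le0$, so $M_g(k)=0$.

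\textbf{Case (b): $x_g^{k,\star}\neq w_g^k$.} Set $\delta=\|x_g^{k,\star}-w_g^k\|>0$. By Step 2 there is a finite $m_1$ such that $\|x_g^{k,m}-x_g^{k,\star}\|\le\delta/2$ for all $m\ge m_1$. For such $m$ the triangle inequality gives $\|x_g^{k,m}-w_g^k\|\ge\delta/2$, so the right-hand side of the test is at least $\theta_g\delta/2$. The left-hand side tends to zero, so there is a finite $m_2$ with $\|\nabla H_g^k(x_g^{k,m})\|\le\theta_g\delta/2$ for $m\ge m_2$. The test therefore holds at every $m\ge\max\{m_1,m_2\}$, and the first such index $M_g(k)$ is finite.

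The geometric rate also gives an explicit bound. It suffices that $(1-\zeta_g\mu_g)^{m/2}\le\min\{1/2,\ \theta_g/(2L_H)\}$, which yields
\[
M_g(k)\le\frac{2\log\!\bigl(\max\{2,\,2L_H/\theta_g\}\bigr)}{-\log(1-\zeta_g\mu_g)},
\]
independent of $k$ and of $\delta$.

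\emph{Conclusion.} Since every local loop in Algorithm~\ref{alg:wq_gadmm} runs \eqref{eq:local_gd} on such an $H_g^k$, each one terminates after finitely many steps. The main obstacle is Step 3, namely the possibility that the relative tolerance vanishes at the minimizer. The case split handles it: when the tolerance does vanish, the test is already satisfied at initialization.
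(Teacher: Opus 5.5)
Your proof is correct and follows the paper's argument: strong convexity of $H_g^k$ with modulus $\rho+\eta-L_g$, convergence of the inner iterates to the unique minimizer, and the same case split on whether that minimizer equals $w_g^k$, using the $\delta/2$ triangle-inequality step. The only difference is that you obtain convergence from the linear contraction $\|x_g^{k,m+1}-x_g^{k,\star}\|^2\le(1-\zeta_g\mu_g)\|x_g^{k,m}-x_g^{k,\star}\|^2$, whereas the paper sums the sufficient decrease to get $\sum_m\|\nabla H(x^m)\|^2<\infty$ and then uses $\|x^m-x^*\|\le\|\nabla H(x^m)\|/\mu_H$; your route buys a uniform iteration bound that the paper does not state, though the bound should read $M_g(k)\le\lceil\cdot\rceil$, and the PL route you mention gives the distance estimate only up to a factor $\sqrt{L_H/\mu_g}$.
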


\begin{IEEEproof}
Fix a group and iteration, with the received reference held constant. Write $H=H_g^k$, $x^m=x_g^{k,m}$, and $x^0=w_g^k$. Smoothness gives
\[
\phi_g(y)\ge
\phi_g(x)+\langle\nabla\phi_g(x),y-x\rangle
-\frac{L_g}{2}\|y-x\|^2.
\]
Adding the quadratic terms in \eqref{eq:local_subproblem} shows that $H$ is $\mu_H$-strongly convex, where $\mu_H=\rho+\eta-L_g>0$. Its gradient is $L_H$-Lipschitz continuous, with $L_H=L_g+\rho+\eta$, and it has a unique minimizer $x^*$.

The prescribed gradient step satisfies
\[
H(x^{m+1})
\le H(x^m)-\frac{\zeta_g}{2}\|\nabla H(x^m)\|^2.
\]
For the sequence continued without stopping, summation yields
\[
\sum_{m=0}^{\infty}\|\nabla H(x^m)\|^2
\le
\frac{2[H(x^0)-H(x^*)]}{\zeta_g}<\infty.
\]
Thus $\nabla H(x^m)\to0$, and strong convexity implies
\[
\|x^m-x^*\|
\le\frac{\|\nabla H(x^m)\|}{\mu_H}\to0.
\]
If $x^0=x^*$, the stopping criterion holds immediately. Otherwise, set $D=\|x^*-x^0\|>0$. For all sufficiently large finite $m$,
\[
\|x^m-x^0\|\ge D/2,
\qquad
\|\nabla H(x^m)\|\le\theta_gD/2.
\]
Hence $\|\nabla H(x^m)\|\le\theta_g\|x^m-x^0\|$, which proves finite termination.
\end{IEEEproof}

\begin{lemma}[Dual balance and shared cache]
\label{lemma:dual_balance}
The iterates generated by Algorithm~\ref{alg:wq_gadmm} satisfy
\begin{equation}
\sum_{g=1}^G\lambda_g^k=0,
\qquad
w^k=\frac1G\sum_{g=1}^G w_g^k,
\qquad k\ge0.
\label{eq:nc_balance}
\end{equation}
After their respective refresh checks in iteration $k$, all groups use the same unquantized cache:
\[
z_g^k=\widehat z^k=w^{d^k},
\qquad
d_g^k=d^k,
\qquad
0\le k-d^k\le\tau_{\max}.
\]
Moreover,
\begin{equation}
\begin{aligned}
\widehat z^{k+1}&\in\{\widehat z^k,w^{k+1}\},\\
\|w^{k+1}-\widehat z^{k+1}\|^2
&\le\|w^{k+1}-\widehat z^k\|^2.
\end{aligned}
\label{eq:nc_cache_transition}
\end{equation}
\end{lemma}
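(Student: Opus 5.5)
Both balance identities in \eqref{eq:nc_balance} follow by induction on $k$; the cache statements follow by a separate induction over the refresh rule.

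\textbf{Base case.} The initialization of Algorithm~\ref{alg:wq_gadmm} gives $u_g^0=0$ and $w_g^0=w^0$. Hence $\sum_g\lambda_g^0=0$ and $w^0=\frac1G\sum_g w_g^0$.

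\textbf{Inductive step for the balance identities.} Sum \eqref{eq:dual_update_scaled} over $g$ and substitute \eqref{eq:global_update}:
\[
\sum_{g}u_g^{k+1}=\sum_g u_g^k+\sum_g w_g^{k+1}-Gw^{k+1}=\sum_g u_g^k+\sum_g w_g^{k+1}-\sum_g\bigl(w_g^{k+1}+u_g^k\bigr)=0 .
\]
This holds for every $k\ge0$ without using the inductive hypothesis, so $\sum_g\lambda_g^{k}=\rho\sum_g u_g^{k}=0$ for all $k\ge0$. Feeding $\sum_g u_g^k=0$ back into \eqref{eq:global_update} gives $w^{k+1}=\frac1G\sum_g w_g^{k+1}$. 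Quantization does not interfere, because $w_g^{k+1}$ already denotes the quantized uploaded model, and the cloud uses exactly that quantity in both updates.

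\textbf{Shared cache.} The key observation is that every group is activated exactly once per window, by \eqref{eq:window_coverage}. Moreover, the refresh check in window $k$ depends only on $d_g^k$ and $k$, not on the round or order of activation. So, inductively, assume that at the start of iteration $k$ all groups carry the same retained pair $(z^{k-1}_{\rm post},d^{k-1})$, namely the post-refresh values from iteration $k-1$. For $k=0$ this is $(w^0,0)$. Then every group computes the same age $\tau^k=k-d^{k-1}$ and either all refresh or none do. Hence after the checks, $d_g^k=d^k$ is the same for all $g$, with $d^k\in\{d^{k-1},k\}$, and $z_g^k=w^{d^k}=:\widehat z^k$. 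The bound $0\le k-d^k\le\tau_{\max}$ is exactly \eqref{eq:bounded_staleness}. Also $d^k\le k$, since $d^{k-1}\le k-1$ inductively.

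\textbf{Cache transition.} Apply the same dichotomy at iteration $k+1$: either $\widehat z^{k+1}=\widehat z^k$ (no refresh) or $\widehat z^{k+1}=w^{k+1}$ (refresh). In the first case the inequality in \eqref{eq:nc_cache_transition} holds with equality. In the second case the left side is $0$.

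The only delicate point is making the sharing of $d^k$ rigorous, given that different groups are activated in different physical rounds within a window. This is handled by noting that the global model $w^k$ is fixed throughout $\mathcal W_k$, so a refresh in any round of the window writes the same $w^k$. I expect this bookkeeping, rather than any estimate, to be the main obstacle.
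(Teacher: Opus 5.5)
Your proposal is correct and follows essentially the same route as the paper. You sum the dual update against the cloud average to get $\sum_g u_g^{k+1}=0$, with $u_g^0=0$ and $w_g^0=w^0$ as the base case. You then induct on the common pair $(\widehat z^k,d^k)$, using the shared threshold and the window-constant $w^k$; this reproduces the paper's refresh recursion and the retain/refresh dichotomy that gives \eqref{eq:nc_cache_transition}.
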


\begin{IEEEproof}
The cloud and dual updates give
\[
\sum_g u_g^{k+1}
=\sum_g u_g^k+\sum_g w_g^{k+1}-Gw^{k+1}=0.
\]
Together with $u_g^0=0$ and $\lambda_g^k=\rho u_g^k$, this proves dual balance. Substitution into the cloud update proves the averaging identity, which also holds initially because $w_g^0=w^0$.

All groups start with the same cache and index. The common threshold and fixed cloud model within each window give
\[
(\widehat z^{k+1},d^{k+1})=
\begin{cases}
(\widehat z^k,d^k),&k+1-d^k\le\tau_{\max},\\
(w^{k+1},k+1),&k+1-d^k>\tau_{\max}.
\end{cases}
\]
Induction proves the shared-cache statement and its age bound. The squared distances in \eqref{eq:nc_cache_transition} are equal when the cache is retained; otherwise, the left-hand side is zero.
\end{IEEEproof}

The averaging identity and dual update also imply
\begin{equation}
\begin{aligned}
\sum_g r_g^k&=0,
&
\sum_g s_g^k&=Gv^k,\\
G\|v^k\|^2&\le\|s^k\|^2,
&
\lambda^{k+1}-\lambda^k&=\rho r^{k+1}.
\end{aligned}
\label{eq:nc_basic_identities}
\end{equation}

\begin{lemma}[Local update identity and error bound]
\label{lemma:perturbed_optimality}
Define the combined local and quantization error by
\begin{equation}
\begin{aligned}
\xi_g^{k+1}={}&e_g^{k+1}
+\nabla\phi_g(w_g^{k+1})
-\nabla\phi_g(\bar w_g^{k+1})\\
&+\rho\varepsilon_{d,g}^k
+(\rho+\eta)\varepsilon_{u,g}^{k+1},
\end{aligned}
\label{eq:nc_xi_def}
\end{equation}
and let $\xi^{k+1}=\operatorname{col}(\xi_1^{k+1},\ldots,\xi_G^{k+1})$. The local updates satisfy
\begin{equation}
\begin{aligned}
\nabla f(x^{k+1})+\lambda^{k+1}
&+\rho\mathbf1\otimes(w^{k+1}-\widehat z^k)\\
&+\eta s^{k+1}=\xi^{k+1}.
\end{aligned}
\label{eq:perturbed_optimality}
\end{equation}
Set
\begin{equation}
\begin{aligned}
A&=8\theta^2,\\
C&=4\max\{\rho^2,\;2\theta^2+L^2+(\rho+\eta)^2\}.
\end{aligned}
\label{eq:nc_error_constants}
\end{equation}
Under Assumptions~\ref{assu:nonconvex_smooth} and \ref{assu:bounded_quantization}, the local stopping criterion gives
\begin{equation}
\mathbb E\|\xi^{k+1}\|^2
\le AS_{k+1}+C\sigma_q^2.
\label{eq:nc_xi_bound}
\end{equation}
\end{lemma}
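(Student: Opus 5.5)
The identity \eqref{eq:perturbed_optimality} is algebraic and can be checked group by group, so I would derive it first. After that I would bound $\|\xi_g^{k+1}\|^2$ pointwise and take expectations.

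\emph{Step 1 (identity).} By Lemma~\ref{lemma:dual_balance}, every group uses the common cache $\widehat z^k$ in iteration $k$. Hence $\widetilde c_g^k=\widehat z^k-u_g^k+\varepsilon_{d,g}^k$ by \eqref{eq:downlink_reference}--\eqref{eq:downlink_quantization}, and $\bar w_g^{k+1}=w_g^{k+1}-\varepsilon_{u,g}^{k+1}$ by \eqref{eq:uplink_quantization}. Substituting both into \eqref{eq:local_objective_gradient} and \eqref{eq:local_optimality_residual} gives
\[
e_g^{k+1}=\nabla\phi_g(\bar w_g^{k+1})+\rho\bigl(w_g^{k+1}-\widehat z^k+u_g^k\bigr)+\eta s_g^{k+1}-\rho\varepsilon_{d,g}^k-(\rho+\eta)\varepsilon_{u,g}^{k+1}.
\]
The dual update \eqref{eq:dual_update_scaled} gives $u_g^k=u_g^{k+1}-w_g^{k+1}+w^{k+1}$. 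Therefore $\rho(w_g^{k+1}-\widehat z^k+u_g^k)=\lambda_g^{k+1}+\rho(w^{k+1}-\widehat z^k)$. I would then add and subtract $\nabla\phi_g(w_g^{k+1})$ and move the error terms to the right-hand side. This yields exactly the $g$-th block of \eqref{eq:perturbed_optimality} with $\xi_g^{k+1}$ as in \eqref{eq:nc_xi_def}. Stacking over $g$ completes the identity.

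\emph{Step 2 (pointwise bound).} Apply $\|a_1+a_2+a_3+a_4\|^2\le4\sum_j\|a_j\|^2$ to the four terms of \eqref{eq:nc_xi_def}. Use $L_g\le L$ for the gradient difference, which gives $\|\nabla\phi_g(w_g^{k+1})-\nabla\phi_g(\bar w_g^{k+1})\|^2\le L^2\|\varepsilon_{u,g}^{k+1}\|^2$. For the solver residual, the stopping rule \eqref{eq:local_residual_bound} gives
\[
\|e_g^{k+1}\|\le\theta\|\bar w_g^{k+1}-w_g^k\|=\theta\|s_g^{k+1}-\varepsilon_{u,g}^{k+1}\|.
\]
Hence $\|e_g^{k+1}\|^2\le2\theta^2\|s_g^{k+1}\|^2+2\theta^2\|\varepsilon_{u,g}^{k+1}\|^2$. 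Collecting terms,
\[
\|\xi_g^{k+1}\|^2\le8\theta^2\|s_g^{k+1}\|^2+4\rho^2\|\varepsilon_{d,g}^k\|^2+4\bigl(2\theta^2+L^2+(\rho+\eta)^2\bigr)\|\varepsilon_{u,g}^{k+1}\|^2,
\]
which is at most $A\|s_g^{k+1}\|^2+C(\|\varepsilon_{d,g}^k\|^2+\|\varepsilon_{u,g}^{k+1}\|^2)$ with the constants in \eqref{eq:nc_error_constants}.

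\emph{Step 3 (expectation).} Sum over $g$ to obtain $\|\xi^{k+1}\|^2\le A\|s^{k+1}\|^2+C\sum_g(\|\varepsilon_{d,g}^k\|^2+\|\varepsilon_{u,g}^{k+1}\|^2)$. Take the conditional expectation given $\mathcal F_k$, apply Assumption~\ref{assu:bounded_quantization} termwise, and then use the tower property. This gives $\mathbb E\|\xi^{k+1}\|^2\le AS_{k+1}+C\sigma_q^2$.

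The only point needing care is that these expectations are well defined and finite. Lemma~\ref{lemma:local_termination} guarantees that $\bar w_g^{k+1}$ exists, and Assumption~\ref{assu:finite_moments} makes $S_{k+1}$ finite. I would also check that the downlink error $\varepsilon_{d,g}^k$ is taken after the refresh check, so it is the error in the quantity covered by \eqref{eq:bounded_quant_error}. The algebra in Step~1 is the main place a sign or index slip could occur, so I would verify it carefully.
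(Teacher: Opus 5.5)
Your proposal is correct and matches the paper's proof essentially step for step. Both arguments substitute $\widetilde c_g^k=\widehat z^k-u_g^k+\varepsilon_{d,g}^k$ and $\bar w_g^{k+1}=w_g^{k+1}-\varepsilon_{u,g}^{k+1}$ into $e_g^{k+1}$, use $w_g^{k+1}+u_g^k=u_g^{k+1}+w^{k+1}$, and apply the four-term split with $\|e_g^{k+1}\|^2\le 2\theta_g^2\|s_g^{k+1}\|^2+2\theta_g^2\|\varepsilon_{u,g}^{k+1}\|^2$. You spell out the conditioning and tower-property step that the paper summarizes as ``taking expectations''. Your appeal to Assumption~\ref{assu:finite_moments} for finiteness is harmless but not needed, because the inequality between nonnegative expectations already holds in $[0,\infty]$.
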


\begin{IEEEproof}
The local residual satisfies
\[
\begin{aligned}
e_g^{k+1}={}&\nabla\phi_g(\bar w_g^{k+1})\\
&+\rho(\bar w_g^{k+1}-\widehat z^k+u_g^k
-\varepsilon_{d,g}^k)\\
&+\eta(\bar w_g^{k+1}-w_g^k).
\end{aligned}
\]
Substituting $\bar w_g^{k+1}=w_g^{k+1}-\varepsilon_{u,g}^{k+1}$ and $w_g^{k+1}+u_g^k=u_g^{k+1}+w^{k+1}$ proves \eqref{eq:perturbed_optimality}.

The stopping criterion and smoothness give
\[
\|e_g^{k+1}\|^2
\le 2\theta_g^2\|s_g^{k+1}\|^2
+2\theta_g^2\|\varepsilon_{u,g}^{k+1}\|^2,
\]
and
\[
\|\nabla\phi_g(w_g^{k+1})-\nabla\phi_g(\bar w_g^{k+1})\|
\le L_g\|\varepsilon_{u,g}^{k+1}\|.
\]
Applying $\|\sum_{i=1}^4a_i\|^2\le4\sum_{i=1}^4\|a_i\|^2$ to \eqref{eq:nc_xi_def} yields
\[
\begin{aligned}
\|\xi_g^{k+1}\|^2\le{}&
8\theta_g^2\|s_g^{k+1}\|^2
+4\rho^2\|\varepsilon_{d,g}^k\|^2\\
&+4\bigl(2\theta_g^2+L_g^2+(\rho+\eta)^2\bigr)
\|\varepsilon_{u,g}^{k+1}\|^2.
\end{aligned}
\]
Summing over groups and taking expectations proves \eqref{eq:nc_xi_bound}.
\end{IEEEproof}

\begin{lemma}[Dual increment bound]
\label{lemma:nc_dual_increment}
For every $k\ge1$, the dual increment satisfies
\begin{equation}
\begin{aligned}
\|\lambda^{k+1}-\lambda^k\|^2
\le{}&4(L+\eta)^2\|s^{k+1}\|^2
+4\eta^2\|s^k\|^2\\
&+4\|\xi^{k+1}\|^2+4\|\xi^k\|^2.
\end{aligned}
\label{eq:nc_dual_bound}
\end{equation}
\end{lemma}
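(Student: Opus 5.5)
The plan is to difference the perturbed optimality identity \eqref{eq:perturbed_optimality} of Lemma~\ref{lemma:perturbed_optimality} between consecutive iterations and isolate $\lambda^{k+1}-\lambda^k$. For $k\ge1$ the identity holds at both indices $k+1$ and $k$, since the index-$k$ version uses the updates produced in iteration $k-1\ge0$. Subtracting them gives
\[
\begin{aligned}
\lambda^{k+1}-\lambda^k
={}&-\bigl(\nabla f(x^{k+1})-\nabla f(x^k)\bigr)
-\eta(s^{k+1}-s^k)\\
&-\rho\mathbf1\otimes\bigl[(w^{k+1}-\widehat z^k)-(w^k-\widehat z^{k-1})\bigr]
+\xi^{k+1}-\xi^k.
\end{aligned}
\]
The troublesome piece is the consensus-direction term $\rho\mathbf1\otimes(\cdot)$, which involves the global model and the shared cache rather than $s$.

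To remove that term, I will project onto the subspace orthogonal to $\mathbf1\otimes\mathbb R^d$. By Lemma~\ref{lemma:dual_balance}, $\sum_g\lambda_g^k=0$ for every $k$, so $\lambda^{k+1}-\lambda^k$ already lies in the orthogonal complement. Let $\Pi=(I_G-\tfrac1G\mathbf1\mathbf1^\top)\otimes I_d$. Applying $\Pi$ to both sides leaves the left side unchanged and kills the $\rho\mathbf1\otimes(\cdot)$ term. Since $\Pi$ is an orthogonal projection with $\|\Pi\|\le1$,
\[
\|\lambda^{k+1}-\lambda^k\|
\le\|\nabla f(x^{k+1})-\nabla f(x^k)+\eta s^{k+1}\|+\eta\|s^k\|+\|\xi^{k+1}\|+\|\xi^k\|.
\]
The gradient of $f$ is $L$-Lipschitz, so the first norm is at most $(L+\eta)\|s^{k+1}\|$. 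Squaring with $(a_1+a_2+a_3+a_4)^2\le4\sum_i a_i^2$ then yields exactly \eqref{eq:nc_dual_bound}.

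I expect the main obstacle to be the cache term. Bounding it directly would bring in $\widehat z^k-\widehat z^{k-1}$, which can jump to $w^k$ on a refresh, and this has no clean control by $s$. The projection step avoids that entirely. It relies only on the dual balance in Lemma~\ref{lemma:dual_balance} and on the cache being shared across all groups, which makes the term an exact multiple of $\mathbf1$. I will also check that the index range $k\ge1$ is what makes the identity at index $k$ available.
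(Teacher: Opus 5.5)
Your proposal is correct and matches the paper's proof. Both arguments apply the projection $(I_G-\mathbf1\mathbf1^\top/G)\otimes I_d$ to the perturbed optimality identity. Dual balance ($\sum_g\lambda_g^k=0$) leaves $\lambda$ unchanged, and the shared cache makes the $\rho\mathbf1\otimes(\cdot)$ term vanish. Both then bound $\nabla f(x^{k+1})-\nabla f(x^k)+\eta s^{k+1}$ by $(L+\eta)\|s^{k+1}\|$ and finish with the four-term squared-norm inequality. The only difference is cosmetic: you subtract the identities at indices $k+1$ and $k$ before projecting, whereas the paper projects first and then subtracts.
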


\begin{IEEEproof}
Let $P=(I_G-\mathbf1\mathbf1^\top/G)\otimes I_d$. Since $P\lambda^k=\lambda^k$ and $P(\mathbf1\otimes a)=0$, projecting \eqref{eq:perturbed_optimality} gives
\[
\lambda^{k+1}
=-P\nabla f(x^{k+1})-\eta Ps^{k+1}+P\xi^{k+1}.
\]
Subtracting the corresponding identity for $\lambda^k$ yields
\[
\begin{aligned}
\lambda^{k+1}-\lambda^k=-P\bigl[
&\nabla f(x^{k+1})-\nabla f(x^k)+\eta s^{k+1}\\
&-\eta s^k-\xi^{k+1}+\xi^k\bigr].
\end{aligned}
\]
The first three terms inside the brackets have combined norm at most $(L+\eta)\|s^{k+1}\|$. Using $\|Py\|\le\|y\|$ and the four-term squared-norm inequality proves \eqref{eq:nc_dual_bound}.
\end{IEEEproof}

Define the augmented Lagrangian
\begin{equation}
\begin{aligned}
\mathcal L_\rho(x,w,\lambda)
={}&f(x)+\langle\lambda,x-\mathbf1\otimes w\rangle+\frac{\rho}{2}\|x-\mathbf1\otimes w\|^2,
\end{aligned}
\label{eq:nc_lagrangian}
\end{equation}
and write $\mathcal L^k=\mathcal L_\rho(x^k,w^k,\lambda^k)$. To account for the cache difference and group model changes, define the potential function
\begin{equation}
\begin{aligned}
\Psi^k={}&\mathcal L^k
+\frac{\rho G}{2}\|w^k-\widehat z^k\|^2+\frac{4(\eta^2+A)}{\rho}\|s^k\|^2,
\qquad k\ge1.
\end{aligned}
\label{eq:Psi_def}
\end{equation}

\begin{lemma}[Potential decrease and lower bound]
\label{lemma:one_step_descent}
Suppose Assumptions~\ref{assu:nonconvex_smooth}-\ref{assu:finite_moments} hold. Define
\begin{equation}
\begin{aligned}
\kappa&=\frac{\rho}{4}+\eta-\frac L2
-\frac{4((L+\eta)^2+\eta^2)+9A}{\rho},\\
\nu&=\frac{9C}{\rho}.
\end{aligned}
\label{eq:nc_descent_constants}
\end{equation}
If
\begin{equation}
\rho>\frac{4L}{3},
\qquad
\kappa>0,
\label{eq:nc_parameter_condition}
\end{equation}
then, for all $k\ge1$,
\begin{equation}
\mathbb E[\Psi^{k+1}]
\le\mathbb E[\Psi^k]-\kappa S_{k+1}+\nu\sigma_q^2
\label{eq:main_descent}
\end{equation}
and
\begin{equation}
\mathbb E[\Psi^k]
\ge F_{\inf}-\frac{C}{\rho-L}\sigma_q^2.
\label{eq:nc_potential_lower}
\end{equation}
\end{lemma}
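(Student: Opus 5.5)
The descent inequality \eqref{eq:main_descent} follows from the standard three-block splitting of nonconvex ADMM analyses, adapted to the cache and quantization errors. Write
\[
\Psi^{k+1}-\Psi^k
=(\mathcal L_\rho(x^{k+1},w^k,\lambda^k)-\mathcal L^k)
+(\mathcal L_\rho(x^{k+1},w^{k+1},\lambda^k)-\mathcal L_\rho(x^{k+1},w^k,\lambda^k))
+(\mathcal L^{k+1}-\mathcal L_\rho(x^{k+1},w^{k+1},\lambda^k)),
\]
and add the changes of the cache term and the $\|s\|^2$ term.

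\textbf{Dual step.} The third difference equals $\langle\lambda^{k+1}-\lambda^k,r^{k+1}\rangle=\rho^{-1}\|\lambda^{k+1}-\lambda^k\|^2$, using \eqref{eq:nc_basic_identities}. Lemma~\ref{lemma:nc_dual_increment} then bounds it by
\[
\frac{4}{\rho}\Bigl[(L+\eta)^2\|s^{k+1}\|^2+\eta^2\|s^k\|^2+\|\xi^{k+1}\|^2+\|\xi^k\|^2\Bigr].
\]

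\textbf{$w$-step.} Since $\mathcal L_\rho$ is $\rho G$-strongly convex in $w$, and $w^{k+1}$ minimizes $\mathcal L_\rho(x^{k+1},\cdot,\lambda^k)$ (which holds by \eqref{eq:global_update} and dual balance), the second difference is at most $-\frac{\rho G}{2}\|v^{k+1}\|^2\le0$.

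\textbf{$x$-step.} The $x$-step carries the main obstacle, because $x^{k+1}$ is not a minimizer of the $x$-subproblem. Instead I will use $L$-smoothness of $f$ as an upper bound,
\[
f(x^{k+1})\le f(x^k)+\langle\nabla f(x^{k+1}),s^{k+1}\rangle+\tfrac L2\|s^{k+1}\|^2,
\]
together with the exact expansion of the quadratic term. Into the resulting inner product I substitute \eqref{eq:perturbed_optimality} for $\nabla f(x^{k+1})$. This produces:
\begin{itemize}
\item the term $-\eta\|s^{k+1}\|^2$;
\item the term $\langle\xi^{k+1},s^{k+1}\rangle$;
\item cross terms involving $\rho\mathbf1\otimes(w^{k+1}-\widehat z^k)$ and $\lambda^{k+1}-\lambda^k$, paired against $s^{k+1}$.
\end{itemize}
I will use $\langle\mathbf1\otimes a,s^{k+1}\rangle=G\langle a,v^{k+1}\rangle$ to convert the cache cross term into $\rho G\langle w^{k+1}-\widehat z^k,v^{k+1}\rangle$. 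The polarization identity then turns this into
\[
\frac{\rho G}{2}\Bigl(\|w^{k}-\widehat z^k\|^2-\|w^{k+1}-\widehat z^k\|^2\Bigr)
\]
plus a $\|v^{k+1}\|^2$ remainder absorbed by the $w$-step. By \eqref{eq:nc_cache_transition}, the difference $\frac{\rho G}{2}(\|w^k-\widehat z^k\|^2-\|w^{k+1}-\widehat z^k\|^2)$ dominates the change of the cache term in $\Psi$.

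\textbf{Young's inequality.} I apply Young's inequality with weight $\rho/4$ to the remaining cross terms. For example, $\langle\xi^{k+1},s^{k+1}\rangle\le\frac{\rho}{8}\|s^{k+1}\|^2+\frac{2}{\rho}\|\xi^{k+1}\|^2$, and the $\lambda$-difference terms are handled similarly. This is the source of the $\rho/4$ in $\kappa$ and of the condition $\rho>4L/3$ needed to keep the net coefficients consistent.

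\textbf{Assembling the descent.} After taking expectations I bound $\mathbb E\|\xi^{k+1}\|^2$ and $\mathbb E\|\xi^k\|^2$ by Lemma~\ref{lemma:perturbed_optimality}, with $AS_{k+1}+C\sigma_q^2$ and $AS_k+C\sigma_q^2$ respectively. The collected total coefficient on $\|\xi\|^2$-type terms is $9/\rho$, giving the constants $9A/\rho$ and $\nu=9C/\rho$. The terms in $S_k$, with coefficient $4(\eta^2+A)/\rho$, are exactly cancelled by the telescoping weight $\frac{4(\eta^2+A)}{\rho}(S_k-S_{k+1})$ built into $\Psi$. The net coefficient on $S_{k+1}$ is then $-\kappa$. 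Assumption~\ref{assu:finite_moments} guarantees that all expectations are finite, so rearranging is legitimate.

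\textbf{Lower bound.} For \eqref{eq:nc_potential_lower}, I drop the two nonnegative extra terms of $\Psi^k$ and rewrite $\mathcal L^k$ using $\lambda^k=-P\nabla f(x^k)-\eta Ps^k+P\xi^k$. Since $\sum_g\lambda_g^k=0$, we have
\[
\langle\lambda^k,x^k-\mathbf1\otimes w^k\rangle=\langle\lambda^k,r^k\rangle .
\]
Smoothness gives $f(\mathbf1\otimes w^k)\ge f(x^k)+\langle\nabla f(x^k),-r^k\rangle-\frac L2\|r^k\|^2$, so
\[
\mathcal L^k\ge F(w^k)+\langle -\eta s^k+\xi^k,r^k\rangle+\frac{\rho-L}{2}\|r^k\|^2 .
\]
Completing the square with $\rho-L>0$ leaves $F_{\inf}-\frac{1}{\rho-L}\|\xi^k-\eta s^k\|^2$. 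The $\eta s^k$ part is controlled by the retained $\frac{4(\eta^2+A)}{\rho}\|s^k\|^2$ term; this uses $\rho>4L/3$ and $\kappa>0$, which imply $\rho$ is large relative to $\rho-L$ scales. The $\xi^k$ part is bounded in expectation by Lemma~\ref{lemma:perturbed_optimality}, where the $AS_k$ portion is again absorbed and the remainder is $\frac{C}{\rho-L}\sigma_q^2$.

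I expect the delicate bookkeeping to lie in this absorption and in matching the exact constant $9$, and I would adjust the Young weights until those coefficients close.
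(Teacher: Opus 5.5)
Your skeleton matches the paper's. It has the three-block split, the exact $w$-step decrease $\frac{\rho G}{2}\|v^{k+1}\|^2$, and the dual step $\frac1\rho\|\lambda^{k+1}-\lambda^k\|^2$ bounded by the dual-increment lemma. It also has polarization of the cache term with the cache transition, and telescoping of the $S_k$ term. The $x$-step, however, has a real gap.

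You keep a cross term $\langle\lambda^{k+1}-\lambda^k,s^{k+1}\rangle$ and plan to estimate it by Young's inequality. That cannot give the stated $\kappa$ and $\nu$. Any such estimate adds another positive multiple of $\|\lambda^{k+1}-\lambda^k\|^2/\rho$. The dual-increment lemma turns that into extra $\|s^{k+1}\|^2$, $\|s^k\|^2$, $\|\xi^{k+1}\|^2$, $\|\xi^k\|^2$ terms; with your weights the $\|\xi\|^2$ coefficients total $26/\rho$, not $9/\rho$. No retuning of weights fixes this.

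The penalty expansion also leaves $\rho\langle x^{k+1}-\mathbf{1}\otimes w^k,s^{k+1}\rangle$. This contains $\rho G\|v^{k+1}\|^2$, which can be as large as $\rho\|s^{k+1}\|^2$. If estimated rather than cancelled, it destroys the $-\frac{\rho}{2}\|s^{k+1}\|^2$ descent.

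The missing observation is that nothing here needs estimating. Since $\lambda^{k+1}-\lambda^k=\rho r^{k+1}$ and $x^{k+1}-\mathbf{1}\otimes w^k=r^{k+1}+\mathbf{1}\otimes v^{k+1}$, the local update identity gives exactly
\[
\nabla f(x^{k+1})+\lambda^k+\rho(x^{k+1}-\mathbf{1}\otimes w^k)=\xi^{k+1}-\eta s^{k+1}+\rho\,\mathbf{1}\otimes(\widehat z^k-w^k).
\]
So the $x$-step change is at most $-\frac{\rho+2\eta-L}{2}\|s^{k+1}\|^2+\langle\xi^{k+1},s^{k+1}\rangle+\rho G\langle\widehat z^k-w^k,v^{k+1}\rangle$. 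Polarizing the last term gives the cache difference plus $+\frac{\rho G}{2}\|v^{k+1}\|^2$, and the $w$-step decrease cancels that exactly. The only Young step is then $\langle\xi^{k+1},s^{k+1}\rangle\le\frac{\rho}{4}\|s^{k+1}\|^2+\frac{1}{\rho}\|\xi^{k+1}\|^2$. The constant $9$ comes from $\frac1\rho+\frac4\rho$ on $\|\xi^{k+1}\|^2$ plus $\frac4\rho$ on $\|\xi^k\|^2$. The condition $\rho>4L/3$ plays no role in the descent inequality.

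Your lower-bound argument follows the paper's route, and the projected form of $\lambda^k$ is a valid substitute for the paper's orthogonality remark. Three slips need fixing:
\begin{enumerate}
\item The smoothness inequality you quote points the wrong way. You need $F(w^k)\le f(x^k)-\langle\nabla f(x^k),r^k\rangle+\frac{L}{2}\|r^k\|^2$; your displayed consequence for $\mathcal L^k$ is the correct one.
\item Completing the square gives $-\frac{1}{2(\rho-L)}\|\xi^k-\eta s^k\|^2\ge-\frac{1}{\rho-L}\bigl(\|\xi^k\|^2+\eta^2\|s^k\|^2\bigr)$. With your factor $\frac{1}{\rho-L}$ you would end with $\frac{2C}{\rho-L}\sigma_q^2$ and would need $\rho\ge 2L$.
\item You must drop only the cache term and keep $\frac{4(\eta^2+A)}{\rho}\|s^k\|^2$. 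Absorbing $\frac{\eta^2+A}{\rho-L}S_k$ then requires exactly $\frac{4}{\rho}\ge\frac{1}{\rho-L}$, i.e.\ $\rho\ge\frac{4L}{3}$, and uses nothing about $\kappa$.
\end{enumerate}
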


\begin{IEEEproof}
Smoothness gives the following bound on the change in the objective:
\[
f(x^{k+1})-f(x^k)
\le\langle\nabla f(x^{k+1}),s^{k+1}\rangle
+\frac L2\|s^{k+1}\|^2.
\]
Combining this inequality with the penalty difference and \eqref{eq:perturbed_optimality} gives
\begin{equation}
\begin{aligned}
&\mathcal L_\rho(x^{k+1},w^k,\lambda^k)
-\mathcal L_\rho(x^k,w^k,\lambda^k)\\
&\quad\le
-\frac{\rho+2\eta-L}{2}\|s^{k+1}\|^2
+\langle\xi^{k+1},s^{k+1}\rangle\\
&\qquad\quad
+\rho G\langle\widehat z^k-w^k,v^{k+1}\rangle,
\end{aligned}
\label{eq:nc_primal_descent}
\end{equation}
where the last term uses $\sum_gs_g^{k+1}=Gv^{k+1}$.

The cloud update minimizes a quadratic with Hessian $\rho G I_d$, decreasing the Lagrangian by $\rho G\|v^{k+1}\|^2/2$. The dual update increases it by $\|\lambda^{k+1}-\lambda^k\|^2/\rho$. Therefore,
\begin{equation}
\begin{aligned}
\mathcal L^{k+1}-\mathcal L^k\le{}&
-\frac{\rho+2\eta-L}{2}\|s^{k+1}\|^2
-\frac{\rho G}{2}\|v^{k+1}\|^2\\
&+\langle\xi^{k+1},s^{k+1}\rangle\\
&+\rho G\langle\widehat z^k-w^k,v^{k+1}\rangle\\
&+\frac1\rho\|\lambda^{k+1}-\lambda^k\|^2.
\end{aligned}
\label{eq:nc_lagrangian_difference}
\end{equation}
The cache term satisfies the identity
\begin{equation}
\begin{aligned}
2\langle\widehat z^k-w^k,v^{k+1}\rangle
={}&\|w^k-\widehat z^k\|^2
-\|w^{k+1}-\widehat z^k\|^2+\|v^{k+1}\|^2.
\end{aligned}
\label{eq:nc_cache_identity}
\end{equation}
Set $M^k=\mathcal L^k+\rho G\|w^k-\widehat z^k\|^2/2$. Substituting \eqref{eq:nc_cache_identity} into \eqref{eq:nc_lagrangian_difference} and applying \eqref{eq:nc_cache_transition} yields
\begin{equation}
\begin{aligned}
M^{k+1}-M^k\le{}&
-\frac{\rho+2\eta-L}{2}\|s^{k+1}\|^2
+\langle\xi^{k+1},s^{k+1}\rangle\\
&+\frac1\rho\|\lambda^{k+1}-\lambda^k\|^2.
\end{aligned}
\label{eq:nc_cache_potential_descent}
\end{equation}
Using $\langle\xi,s\rangle\le\rho\|s\|^2/4+\|\xi\|^2/\rho$ and \eqref{eq:nc_dual_bound}, we obtain
\begin{equation}
\begin{aligned}
M^{k+1}-M^k\le{}&
-\left(\frac{\rho}{4}+\eta-\frac L2
-\frac{4(L+\eta)^2}{\rho}\right)\|s^{k+1}\|^2\\
&+\frac{4\eta^2}{\rho}\|s^k\|^2
+\frac5\rho\|\xi^{k+1}\|^2+\frac4\rho\|\xi^k\|^2.
\end{aligned}
\label{eq:nc_pre_absorption}
\end{equation}
Taking expectations and applying \eqref{eq:nc_xi_bound} gives
\[
\begin{aligned}
\mathbb E[M^{k+1}]-\mathbb E[M^k]
\le{}&
-\left(\frac{\rho}{4}+\eta-\frac L2
-\frac{4(L+\eta)^2+5A}{\rho}\right)S_{k+1}\\
&+\frac{4(\eta^2+A)}{\rho}S_k
+\frac{9C}{\rho}\sigma_q^2.
\end{aligned}
\]
Adding $4(\eta^2+A)(S_{k+1}-S_k)/\rho$ proves \eqref{eq:main_descent}.

For the lower bound, smoothness implies
\[
F(w^k)\le
f(x^k)-\langle\nabla f(x^k),r^k\rangle
+\frac L2\|r^k\|^2.
\]
Hence
\[
\mathcal L^k\ge
F(w^k)+\langle\nabla f(x^k)+\lambda^k,r^k\rangle
+\frac{\rho-L}{2}\|r^k\|^2.
\]
Equation~\eqref{eq:perturbed_optimality} at the preceding update gives
\[
\nabla f(x^k)+\lambda^k
=\xi^k-\eta s^k-\rho\mathbf1\otimes(w^k-\widehat z^{k-1}).
\]
The cache term is orthogonal to $r^k$ because $\sum_g r_g^k=0$. Completing the square therefore yields
\[
\begin{aligned}
\mathcal L^k
&\ge F_{\inf}
-\frac{\|\xi^k-\eta s^k\|^2}{2(\rho-L)}\\
&\ge F_{\inf}
-\frac{\|\xi^k\|^2+\eta^2\|s^k\|^2}{\rho-L}.
\end{aligned}
\]
Using \eqref{eq:nc_xi_bound} and the nonnegative cache term in $\Psi^k$ gives
\[
\begin{aligned}
\mathbb E[\Psi^k]\ge{}&
F_{\inf}-\frac{C}{\rho-L}\sigma_q^2\\
&+(A+\eta^2)
\left(\frac4\rho-\frac1{\rho-L}\right)S_k.
\end{aligned}
\]
The last coefficient is nonnegative when $\rho>4L/3$, proving \eqref{eq:nc_potential_lower}.
\end{IEEEproof}

The parameter condition also implies $\rho+\eta>L$, ensuring finite local updates by Lemma~\ref{lemma:local_termination}. One sufficient choice is $\rho\ge32(L+\eta+\theta)$, which gives $\kappa\ge5\rho/32+\eta>0$.

\subsection{Average KKT Residual Bound}

Define the squared KKT residual as
\begin{equation}
\mathcal R^k
=\|\nabla f(x^k)+\lambda^k\|^2+\|r^k\|^2,
\qquad k\ge1.
\label{eq:nc_kkt_residual}
\end{equation}
Its two terms measure stationarity and consensus errors. The remaining KKT condition, $\sum_g\lambda_g^k=0$, holds exactly by Lemma~\ref{lemma:dual_balance}.

\begin{theorem}[Average KKT residual bound]
\label{thm:bounded_quant_convergence}
Suppose Assumptions~\ref{assu:nonconvex_smooth}-\ref{assu:finite_moments} hold and the parameters satisfy \eqref{eq:nc_parameter_condition}. Set $H=\tau_{\max}+1$ and
\begin{equation}
\begin{aligned}
B&=\mathbb E[\Psi^1]-F_{\inf}
+\frac{C}{\rho-L}\sigma_q^2,\\
T&=3(A+\eta^2)
+\frac{4((L+\eta)^2+\eta^2+2A)}{\rho^2}
+3\rho^2H^2,\\
\widehat C_0&=\mathbb E[\mathcal R^1]
+T\left(S_1+\frac B\kappa\right),\\
C_q&=\frac{T\nu}{\kappa}
+C\left(3+\frac8{\rho^2}\right).
\end{aligned}
\label{eq:nc_rate_constants}
\end{equation}
These constants are finite and independent of $K$, with $B\ge0$. For every integer $K\ge1$,
\begin{equation}
\frac1K\sum_{k=1}^K\mathbb E[\mathcal R^k]
\le\frac{\widehat C_0}{K}+C_q\sigma_q^2.
\label{eq:rate_bound}
\end{equation}
Consequently,
\begin{equation}
\limsup_{K\to\infty}
\frac1K\sum_{k=1}^K\mathbb E[\mathcal R^k]
\le C_q\sigma_q^2.
\label{eq:limsup_bound}
\end{equation}
\end{theorem}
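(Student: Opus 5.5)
The plan is to bound each $\mathbb E[\mathcal R^k]$ by the group-change energies $S_j$ plus $\sigma_q^2$, sum over $k$, and then control $\sum_k S_k$ with the descent and lower-bound estimates of Lemma~\ref{lemma:one_step_descent}.

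\emph{Finiteness and $B\ge0$.} By Assumption~\ref{assu:finite_moments}, $\mathbb E[\Psi^1]$ and $\mathbb E[\mathcal R^1]$ are finite. By \eqref{eq:nc_potential_lower} at $k=1$, $B\ge0$. None of the constants involve $K$.

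\emph{Stationarity term.} I would write \eqref{eq:perturbed_optimality} at the update producing iterate $k$:
\[
\nabla f(x^k)+\lambda^k=\xi^k-\eta s^k-\rho\mathbf1\otimes(w^k-\widehat z^{k-1}).
\]
The three-term inequality then gives
\[
\|\nabla f(x^k)+\lambda^k\|^2\le 3\|\xi^k\|^2+3\eta^2\|s^k\|^2+3\rho^2G\|w^k-\widehat z^{k-1}\|^2.
\]
For the cache term, Lemma~\ref{lemma:dual_balance} gives $\widehat z^{k-1}=w^{d^{k-1}}$ with $k-d^{k-1}\le\tau_{\max}+1=H$. Hence $w^k-\widehat z^{k-1}$ is a sum of at most $H$ consecutive increments $v^j$, with $j\ge1$. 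Cauchy--Schwarz and $G\|v^j\|^2\le\|s^j\|^2$ from \eqref{eq:nc_basic_identities} then give
\[
G\,\mathbb E\|w^k-\widehat z^{k-1}\|^2\le H\sum_{j=\max(1,k-H+1)}^{k}S_j .
\]
Summed over $k$, each $S_j$ appears at most $H$ times, which produces the $3\rho^2H^2$ coefficient in $T$.

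\emph{Consensus term.} For $k\ge2$, \eqref{eq:nc_basic_identities} gives $r^k=(\lambda^k-\lambda^{k-1})/\rho$. Applying Lemma~\ref{lemma:nc_dual_increment} yields
\[
\|r^k\|^2\le\rho^{-2}\bigl[4(L+\eta)^2\|s^k\|^2+4\eta^2\|s^{k-1}\|^2+4\|\xi^k\|^2+4\|\xi^{k-1}\|^2\bigr].
\]
The $k=1$ term is kept as $\mathbb E[\mathcal R^1]$. Inserting Lemma~\ref{lemma:perturbed_optimality}, $\mathbb E\|\xi^j\|^2\le AS_j+C\sigma_q^2$, into both the stationarity and consensus bounds, the $\sigma_q^2$ coefficients add up to $C(3+8/\rho^2)$. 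The $S$ coefficients, after reindexing the $S_{k-1}$ terms, are dominated by $T$. Altogether I expect
\[
\sum_{k=1}^K\mathbb E[\mathcal R^k]\le\mathbb E[\mathcal R^1]+T\sum_{k=1}^KS_k+KC\Bigl(3+\frac8{\rho^2}\Bigr)\sigma_q^2 .
\]

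\emph{Summing the energies.} Telescoping \eqref{eq:main_descent} from $k=1$ to $K-1$ and using \eqref{eq:nc_potential_lower} gives
\[
\kappa\sum_{k=2}^{K}S_k\le B+(K-1)\nu\sigma_q^2,
\]
so $\sum_{k=1}^KS_k\le S_1+B/\kappa+K\nu\sigma_q^2/\kappa$. Substituting, dividing by $K$, and collecting terms gives \eqref{eq:rate_bound} with $\widehat C_0$ and $C_q$ as defined. The limsup statement \eqref{eq:limsup_bound} follows immediately.

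The main obstacle is the bookkeeping for the stale cache. One must verify that the window of increments between $\widehat z^{k-1}$ and $w^k$ has length at most $H$ and never reaches indices $j<1$, using that the initial cache is $w^0$ and $v^1$ is defined. One must also check that double-counting across the sum over $k$ gives exactly the factor $H^2$. A secondary point is checking that $T$ dominates all the $S$ coefficients for every $k$, including the boundary terms $k=1,2$.
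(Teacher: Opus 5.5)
Your proposal is correct and follows essentially the same route as the paper. Stationarity comes from the perturbed optimality identity, with the stale-cache term controlled by Cauchy--Schwarz over at most $H$ increments (each $S_j$ counted at most $H$ times, giving $3\rho^2H^2$); consensus comes from Lemma~\ref{lemma:nc_dual_increment} for iterates $k\ge2$ with $\mathbb E[\mathcal R^1]$ kept separately; and $\sum_k S_k$ is controlled by telescoping \eqref{eq:main_descent} against \eqref{eq:nc_potential_lower}. The differences are cosmetic: you index by the iterate $k$ rather than $k+1$, and using $K$ in place of $K-1$ in the $\sigma_q^2$ count still gives exactly $C_q\sigma_q^2$; the paper also notes that finiteness of $\mathbb E[\Psi^1]$ and $\mathbb E[\mathcal R^1]$ needs smoothness (growth bounds on $f$ and $\nabla f$) together with the finite second moments.
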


\begin{IEEEproof}
For any integer $n\ge1$, summing \eqref{eq:main_descent} and applying \eqref{eq:nc_potential_lower} gives
\begin{equation}
\sum_{k=1}^n S_{k+1}
\le\frac B\kappa+\frac{\nu n}{\kappa}\sigma_q^2.
\label{eq:nc_increment_sum}
\end{equation}
The potential lower bound at $k=1$ also implies $B\ge0$. Finite second moments and smoothness ensure that the constants in \eqref{eq:nc_rate_constants} are finite.

Equations~\eqref{eq:nc_basic_identities}, \eqref{eq:nc_dual_bound}, and \eqref{eq:nc_xi_bound} imply, for $k\ge1$,
\begin{equation}
\begin{aligned}
\mathbb E\|r^{k+1}\|^2\le{}&
\frac{4((L+\eta)^2+A)}{\rho^2}S_{k+1}\\
&+\frac{4(\eta^2+A)}{\rho^2}S_k
+\frac{8C}{\rho^2}\sigma_q^2.
\end{aligned}
\label{eq:nc_consensus_bound}
\end{equation}
Similarly, the local update identity gives
\begin{equation}
\begin{aligned}
\mathbb E\|\nabla f(x^{k+1})+\lambda^{k+1}\|^2
\le{}&3(A+\eta^2)S_{k+1}+3C\sigma_q^2\\
&+3\rho^2G\mathbb E\|w^{k+1}-\widehat z^k\|^2.
\end{aligned}
\label{eq:nc_stationarity_bound}
\end{equation}

Since $\widehat z^k=w^{d^k}$ and $k-d^k\le\tau_{\max}$, the difference $w^{k+1}-\widehat z^k$ contains at most $H$ consecutive increments. Cauchy-Schwarz and $G\|v^j\|^2\le\|s^j\|^2$ give
\begin{equation}
G\|w^{k+1}-\widehat z^k\|^2
\le H
\sum_{j=\max\{1,k-\tau_{\max}+1\}}^{k+1}
\|s^j\|^2.
\label{eq:nc_cache_residual_bound}
\end{equation}
Each increment appears in at most $H$ such sums. Therefore,
\begin{equation}
\sum_{k=1}^n
G\mathbb E\|w^{k+1}-\widehat z^k\|^2
\le H^2
\left(S_1+\sum_{k=1}^nS_{k+1}\right).
\label{eq:nc_cache_sum}
\end{equation}
Using $\sum_{k=1}^nS_k\le S_1+\sum_{k=1}^nS_{k+1}$ and combining the preceding bounds yields
\begin{equation}
\begin{aligned}
\sum_{k=1}^n\mathbb E[\mathcal R^{k+1}]
\le{}&
T\left(S_1+\sum_{k=1}^nS_{k+1}\right)\\
&+nC\left(3+\frac8{\rho^2}\right)\sigma_q^2\\
\le{}&
T\left(S_1+\frac B\kappa\right)+nC_q\sigma_q^2.
\end{aligned}
\label{eq:nc_residual_sum}
\end{equation}

For $K\ge2$, set $n=K-1$ and add $\mathbb E[\mathcal R^1]$ to obtain
\[
\sum_{k=1}^K\mathbb E[\mathcal R^k]
\le\widehat C_0+(K-1)C_q\sigma_q^2.
\]
Dividing by $K$ proves \eqref{eq:rate_bound}. The case $K=1$ follows from $\widehat C_0\ge\mathbb E[\mathcal R^1]$, and taking the limit superior proves \eqref{eq:limsup_bound}.
\end{IEEEproof}

For fixed algorithm parameters, \eqref{eq:limsup_bound} bounds the limiting average squared residual in proportion to the quantization error bound. In the absence of quantization error, $\sigma_q^2=0$, equation~\eqref{eq:rate_bound} implies
\[
\sum_{k=1}^{\infty}\mathbb E[\mathcal R^k]
\le\widehat C_0<\infty,
\qquad
\mathbb E[\mathcal R^k]\to0.
\]
With persistent quantization errors, the result bounds the average residual but does not guarantee convergence of the full model sequence or global optimality.

\section{Experiments}
\label{sec:experiments}

In this section, we evaluate WQ-GADMM on a smooth nonconvex test problem and on the MNIST and CIFAR-10 datasets. We first describe the experimental settings, then examine residual behaviour and quantization precision. Finally, we compare learning performance, group participation, and resource efficiency with the baseline methods.

\subsection{Experimental Setup}
\label{subsec:setup}

The following settings apply to all experiments on MNIST and CIFAR-10. The smooth nonconvex experiment is described separately in Section~\ref{subsec:residual_experiment}.

\textbf{(a) Platform and datasets:}
Experiments were conducted on a workstation equipped with an NVIDIA GeForce RTX 4090 GPU, using Python 3.10 and PyTorch 2.10.0 with CUDA 13.0. We simulated a network with 50 clients divided into five fixed edge groups of ten clients each. Convolutional neural networks were trained on MNIST and CIFAR-10. Client data were partitioned using a Dirichlet distribution with parameter $\alpha=0.1$. For each random seed, all methods shared the same model initialization, data partition, client grouping, and computation profile. Each setting was evaluated over three random seeds; results with error terms are reported as mean $\pm$ standard deviation.

\textbf{(b) Training settings:}
The training workload was fixed at 25,000 cumulative client-gradient evaluations for MNIST and 1,500,000 for CIFAR-10. Local updates used a batch size of 64 and at most two stochastic gradient steps. At most $M_a=2$ groups were active concurrently, and model staleness did not exceed $\tau_{\max}=1$ for any method. P-GADMM and P-GADMM-Sync both used a learning rate of 0.01 and an ADMM penalty parameter of 0.001 on both datasets. For WQ-GADMM, the learning rate, penalty parameter $\rho$, and proximal parameter $\eta$ were 0.20, 0.003, and 0.005 on MNIST, and 0.03, 0.02, and 0.01 on CIFAR-10, respectively. For both datasets, WQ-GADMM used a local stopping-check parameter of 0.05. Its scheduling parameters were $T_{\mathrm{act}}=3$, $\omega_1=1$, $\omega_2=0.2$, and $\epsilon_s=10^{-12}$. Ties were resolved in ascending order of group index.

\textbf{(c) Compared methods:}
We compared P-GADMM \cite{Wei2026PGADMM}, P-GADMM-Sync, WQ-GADMM-FP32, and WQ-GADMM-Q12.

P-GADMM executed at most two group tasks concurrently. The cloud performed a global update upon receiving two completed group results, retaining the previous models of the remaining groups. Clients performed local stochastic gradient descent (SGD), followed by model averaging within each group. P-GADMM-Sync retained the local updates, optimization parameters, and 32-bit communication of P-GADMM. It used a round-robin schedule with a $2+2+1$ activation pattern, with each group updating once before the global and dual updates at the end of the window. WQ-GADMM-FP32 and WQ-GADMM-Q12 used the proposed algorithm with identical optimization parameters and group update sequences. FP32 used 32-bit communication, whereas Q12 applied 12-bit tensor-wise symmetric stochastic quantization to both downlink and uplink exchanges. Both configurations completed one update per group within each $2+2+1$ window.

\textbf{(d) Metrics and simulation settings:}
We measured test accuracy, communication volume, simulated wall-clock time, the Jain fairness index, full-group coverage, and mean inter-completion gaps. Communication volume included cloud-edge model exchanges and excluded communication between clients and edge servers. Simulated wall-clock time accounted for local computation, model transmission, and waiting during concurrent group execution. Client computation heterogeneity followed a Pareto distribution with shape parameter 1.1. Downlink and uplink rates were fixed at 5~Mbit/s and 2~Mbit/s, respectively. The same computation and communication timing rules were applied to all methods, with waiting times determined by their respective update schedules.

\subsection{Residual Behavior on a Smooth Nonconvex Problem}
\label{subsec:residual_experiment}

We examined the effect of communication precision on a smooth nonconvex problem by measuring the per-iteration squared KKT residual and its cumulative average. The problem consisted of five groups and a 12-dimensional model, with group objectives
\begin{equation}
\phi_g(w)=
\sum_{j=1}^{12}
\left[
\frac{q_{gj}}{2}
\left(w_j-\frac{b_{gj}}{q_{gj}}\right)^2
+a(1-\cos w_j)
\right],
\label{eq:synthetic_objective}
\end{equation}
where $q_{gj}\sim\mathcal U(0.15,0.35)$, $b_{gj}\sim\mathcal U(-0.20,0.20)$, and $a=0.8$. The coefficients were fixed across all runs. Each objective is bounded below and has a globally Lipschitz continuous gradient with $L_g=\max_j q_{gj}+a$. Its Hessian is negative definite at $w=\pi\mathbf1$, confirming nonconvexity.

We compared FP32, Q12, and Q8 with identical objectives and algorithm parameters. We set $\rho=12$, $\eta=0.2$, and $\theta_g=0.1$, satisfying \eqref{eq:nc_parameter_condition}. The scheduling and staleness settings were $M_a=2$, $T_{\mathrm{act}}=2$, $\tau_{\max}=2$, and $\omega_1=\omega_2=1$. Local updates used full gradients, stepsize $\zeta_g=(L_g+\rho+\eta)^{-1}$, and the stopping criterion in \eqref{eq:local_residual_bound}.

Each configuration ran for 1,000 logical iterations over three random seeds. For each seed, all groups and configurations shared an initial model with coordinates sampled from $\mathcal U(0.6,1.0)$, and all dual variables were initialized to zero. Q12 and Q8 used stochastic rounding on $2^{12}$ and $2^8$ uniformly spaced levels over $[-2,2]$, respectively, for both communication directions. No quantizer input exceeded this range during the runs. After each global and dual update, we computed the squared KKT residual $\mathcal R^k$ defined in \eqref{eq:nc_kkt_residual} and its cumulative average, $\overline{\mathcal R}_K=K^{-1}\sum_{k=1}^{K}\mathcal R^k$.

\begin{figure}[t]
\centering
\subfloat[Per-iteration KKT residual]{%
\includegraphics[width=0.50\linewidth]{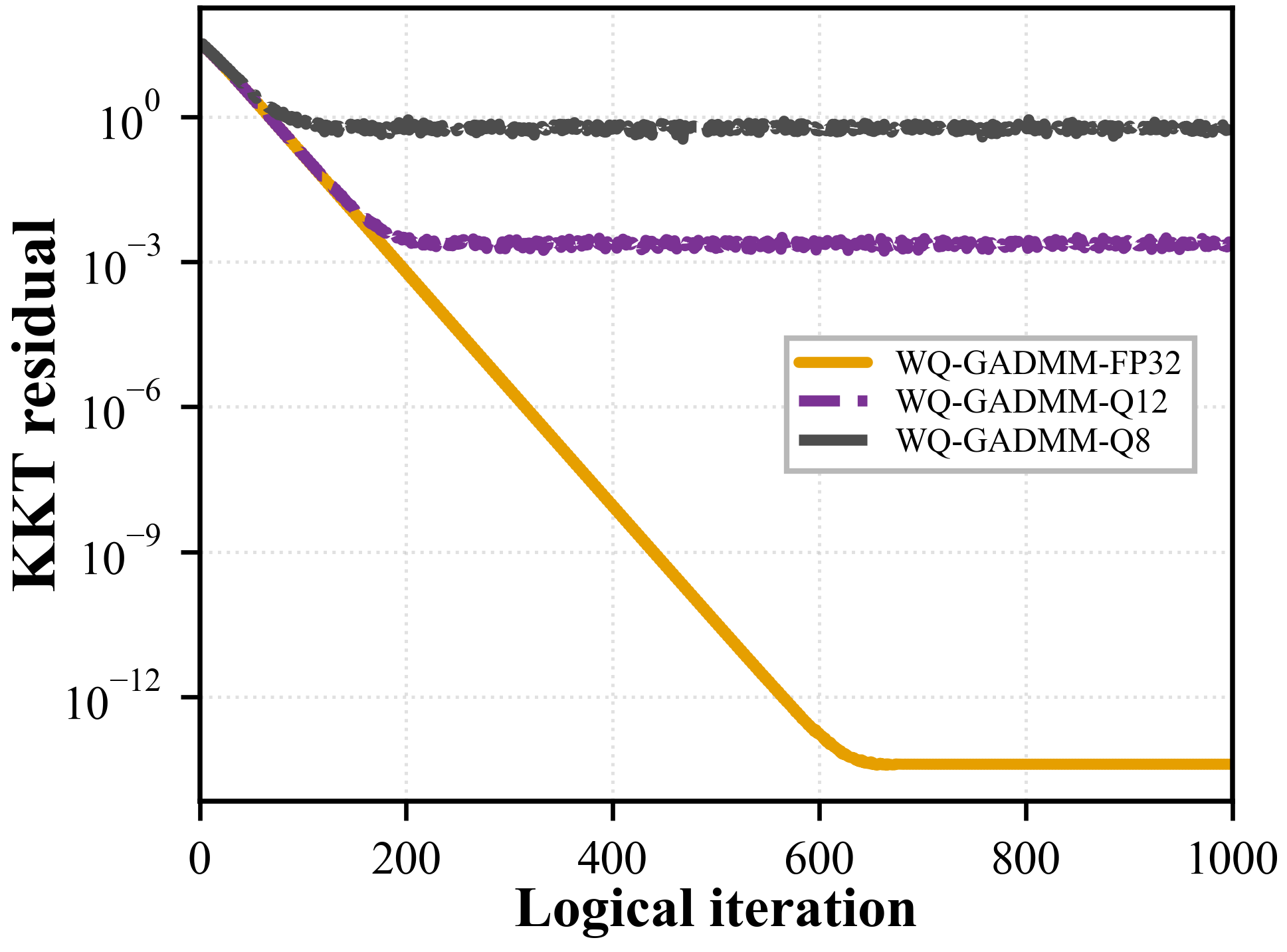}}
\hfill
\subfloat[Average KKT residual]{%
\includegraphics[width=0.50\linewidth]{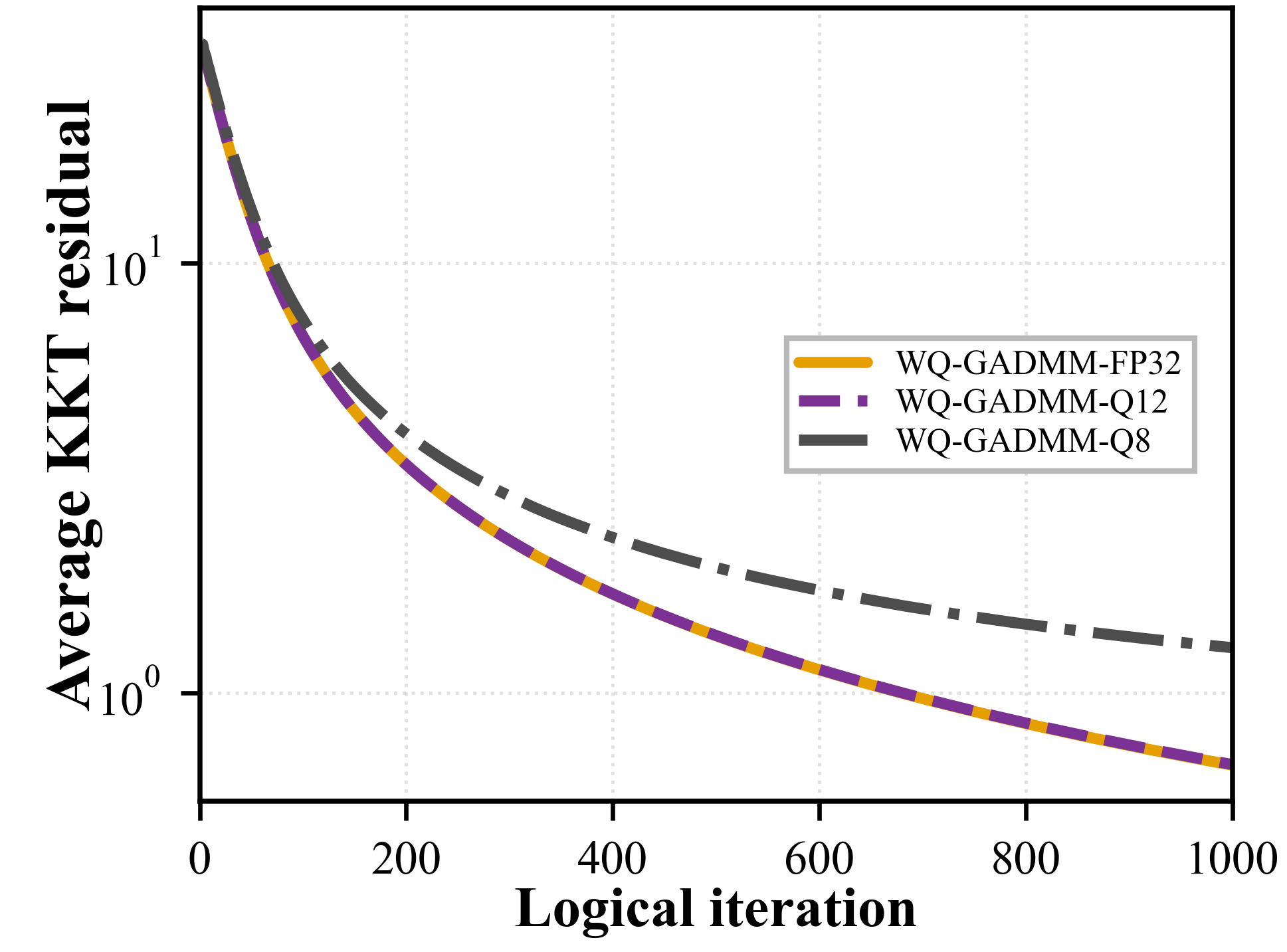}}
\caption{Per-iteration and cumulative average squared KKT residuals on the smooth nonconvex test problem.}
\label{fig:nonconvex_residual}
\end{figure}

Fig.~\ref{fig:nonconvex_residual}(a) shows that the FP32 residual decreased to a low numerical level, whereas Q12 and Q8 fluctuated around higher levels. Over the final 200 iterations and three seeds, the mean Q12 residual was approximately $1/247$ of that for Q8, indicating that finer quantization reduced the residual level in this experiment. In Fig.~\ref{fig:nonconvex_residual}(b), the cumulative average residual decreased for all configurations. FP32 and Q12 nearly overlapped because early iterations dominated their cumulative averages. These results are consistent with the average residual bound and its quantization-dependent error term.

\begin{table*}[t]
\centering
\caption{Effect of quantization precision on WQ-GADMM.}
\label{tab:quantization}
{\footnotesize
\setlength{\tabcolsep}{8pt}
\renewcommand{\arraystretch}{1.12}
\begin{tabular}{cccccc}
\toprule
\makecell{Quantization\\precision} &
\makecell{MNIST\\accuracy (\%)} &
\makecell{CIFAR-10\\accuracy (\%)} &
\makecell{Communication\\reduction (\%)} &
\makecell{MNIST\\consensus residual} &
\makecell{CIFAR-10\\consensus residual} \\
\midrule
FP32 & $95.86 \pm 0.40$ & $67.47 \pm 0.67$ & 0.00 & 0.0903 & \textbf{0.0182} \\
Q16 & $95.85 \pm 0.58$ & $67.48 \pm 0.65$ & 50.00 & 0.0919 & 0.0183 \\
Q12 & $\mathbf{96.05 \pm 0.27}$ & $\mathbf{67.58 \pm 0.63}$ & 62.50 & \textbf{0.0836} & 0.0255 \\
Q8 & $96.01 \pm 0.27$ & $67.57 \pm 0.91$ & \textbf{75.00} & 0.1436 & 0.2509 \\
Q2 & Diverged & Diverged & $93.75^{\dagger}$ & - & - \\
\bottomrule
\end{tabular}

\vspace{1mm}
\parbox{0.93\textwidth}{\footnotesize
$^{\dagger}$Q2 provides a theoretical reduction of 93.75\%, but all runs diverged before completing training.}
}
\end{table*}

\subsection{Effect of Quantization Precision}

We next compared FP32, Q16, Q12, Q8, and Q2 on MNIST and CIFAR-10 to select the communication precision for subsequent experiments. All configurations shared the same data partition, initialization, client grouping, prescribed workload, and group update sequence. Only the precision of the transmitted models varied.

Table~\ref{tab:quantization} shows that Q16, Q12, and Q8 retained final accuracies close to FP32. However, Q8 produced larger consensus residuals than Q12, particularly on CIFAR-10. Q2 diverged on both datasets.We therefore selected Q12 for the remaining comparisons.

\subsection{Overall Performance}

We compared the four methods under the same training workload to evaluate learning performance, communication cost, and simulated completion time.

\begin{figure}[t]
\centering
\subfloat[MNIST]{%
\includegraphics[width=0.50\linewidth]{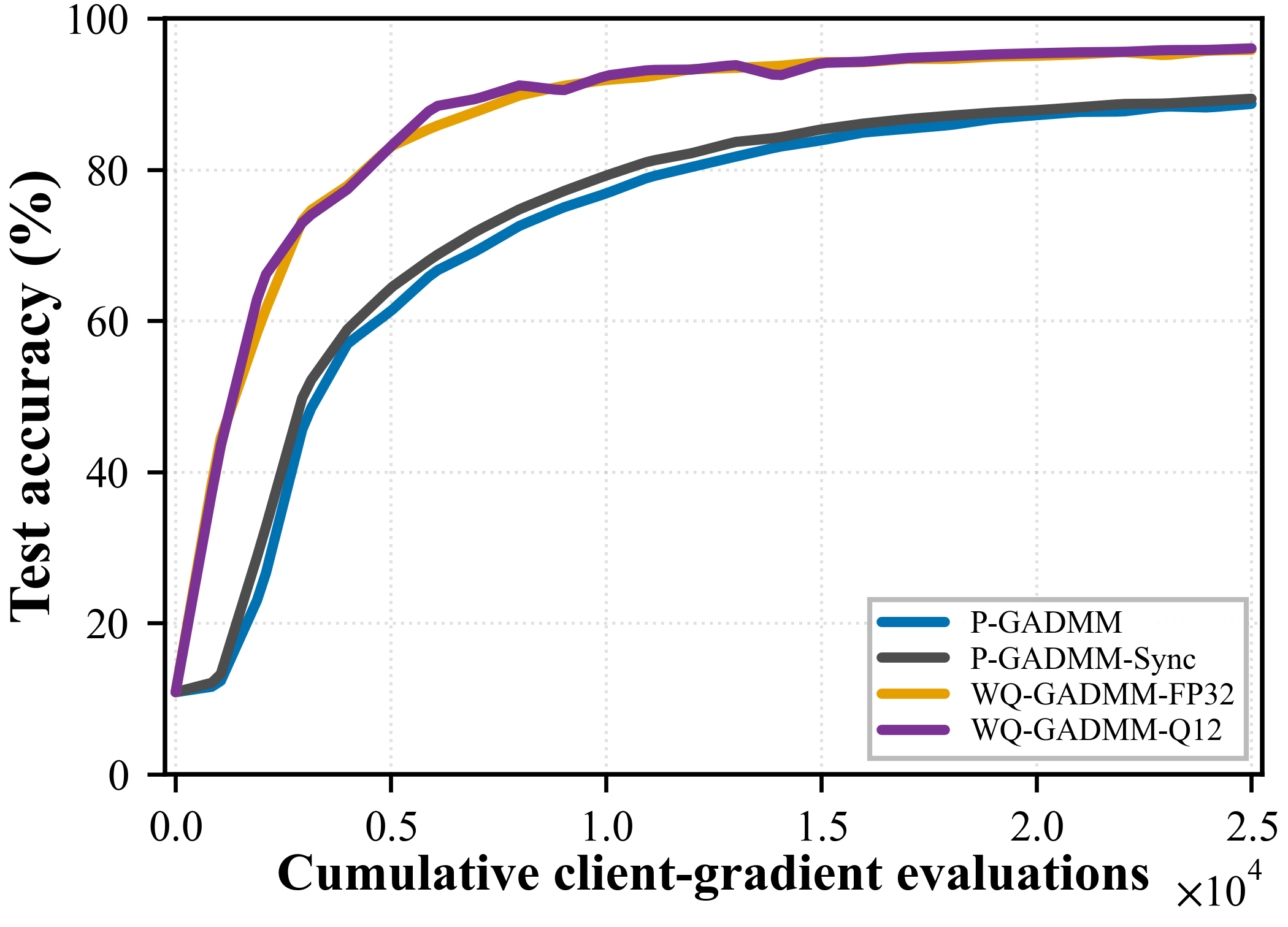}}
\hfill
\subfloat[CIFAR-10]{%
\includegraphics[width=0.50\linewidth]{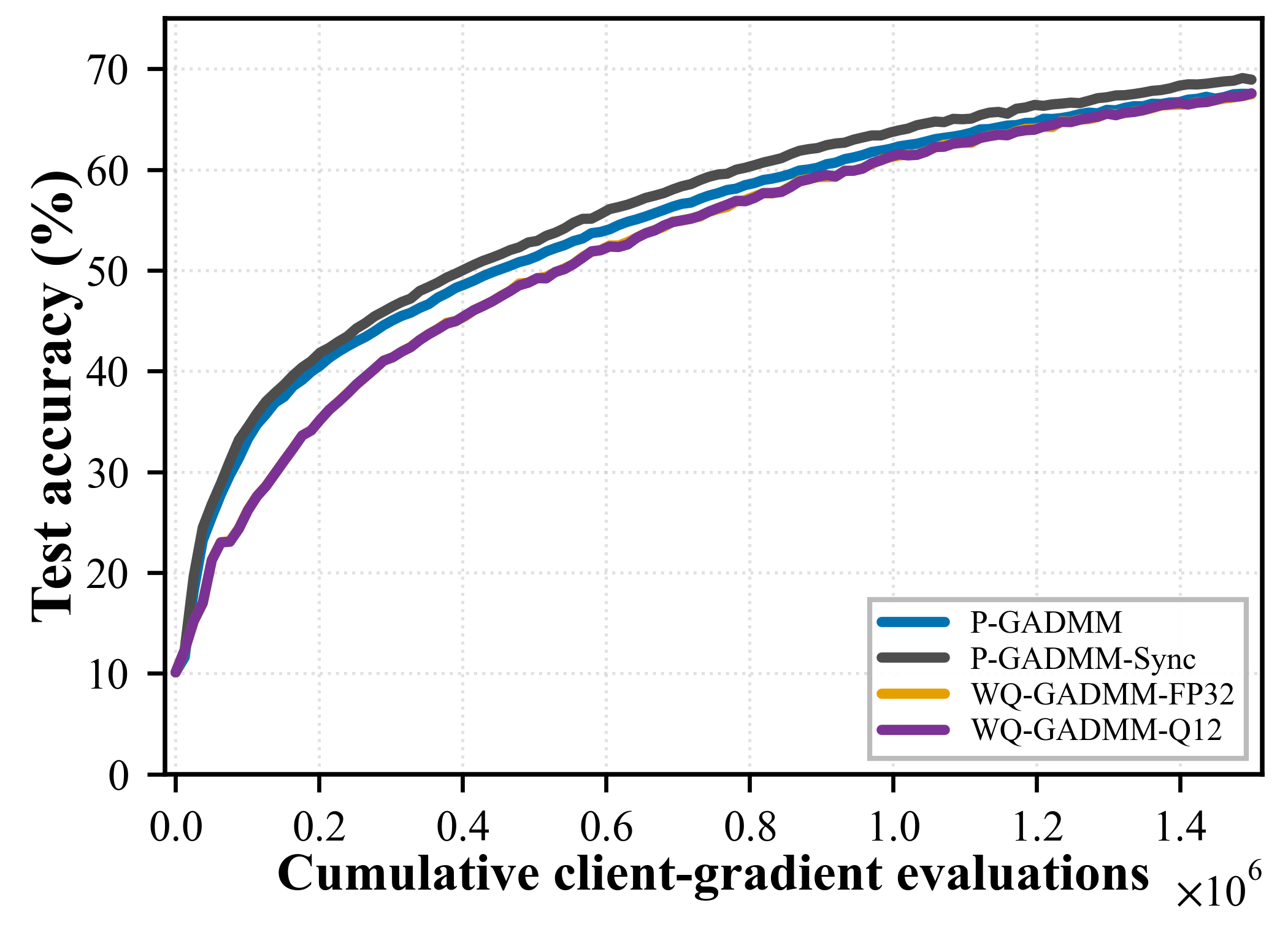}}
\caption{Test accuracy versus cumulative client-gradient evaluations.}
\label{fig:gradient}
\end{figure}

\begin{figure}[t]
\centering
\subfloat[MNIST]{%
\includegraphics[width=0.50\linewidth]{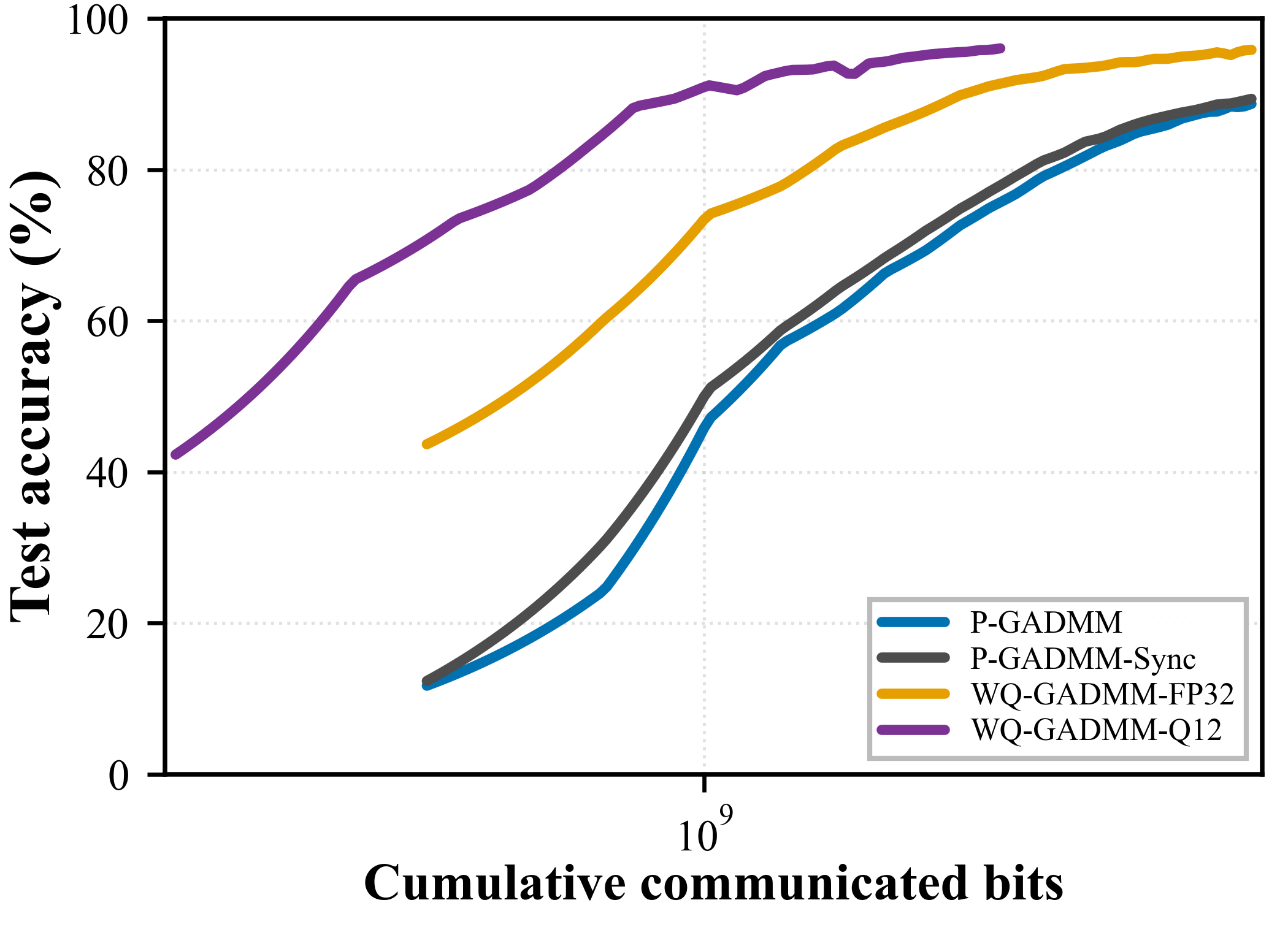}}
\hfill
\subfloat[CIFAR-10]{%
\includegraphics[width=0.50\linewidth]{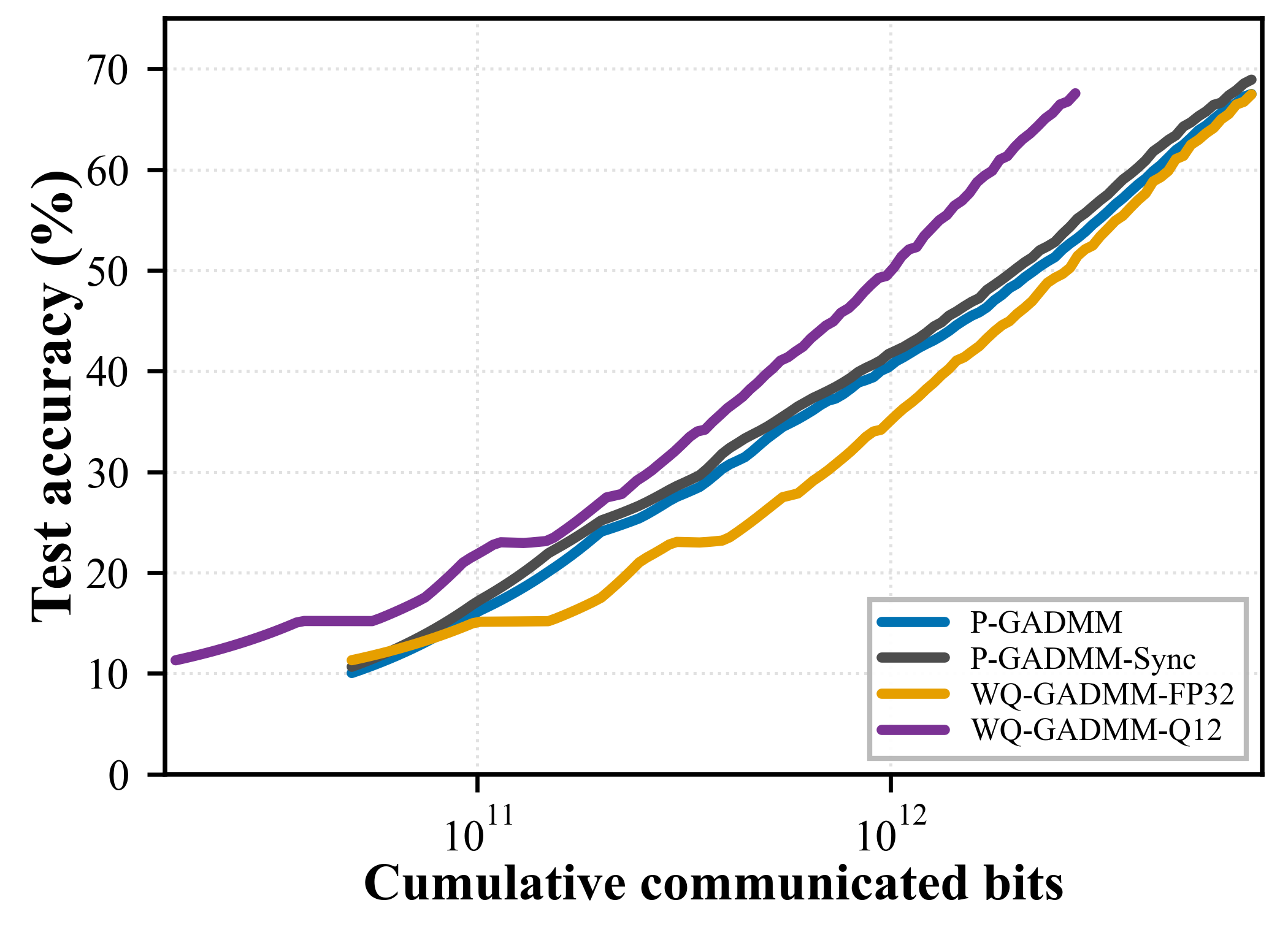}}
\caption{Test accuracy versus cumulative communicated bits.}
\label{fig:communication}
\end{figure}

\begin{figure}[t]
\centering
\subfloat[MNIST]{%
\includegraphics[width=0.50\linewidth]{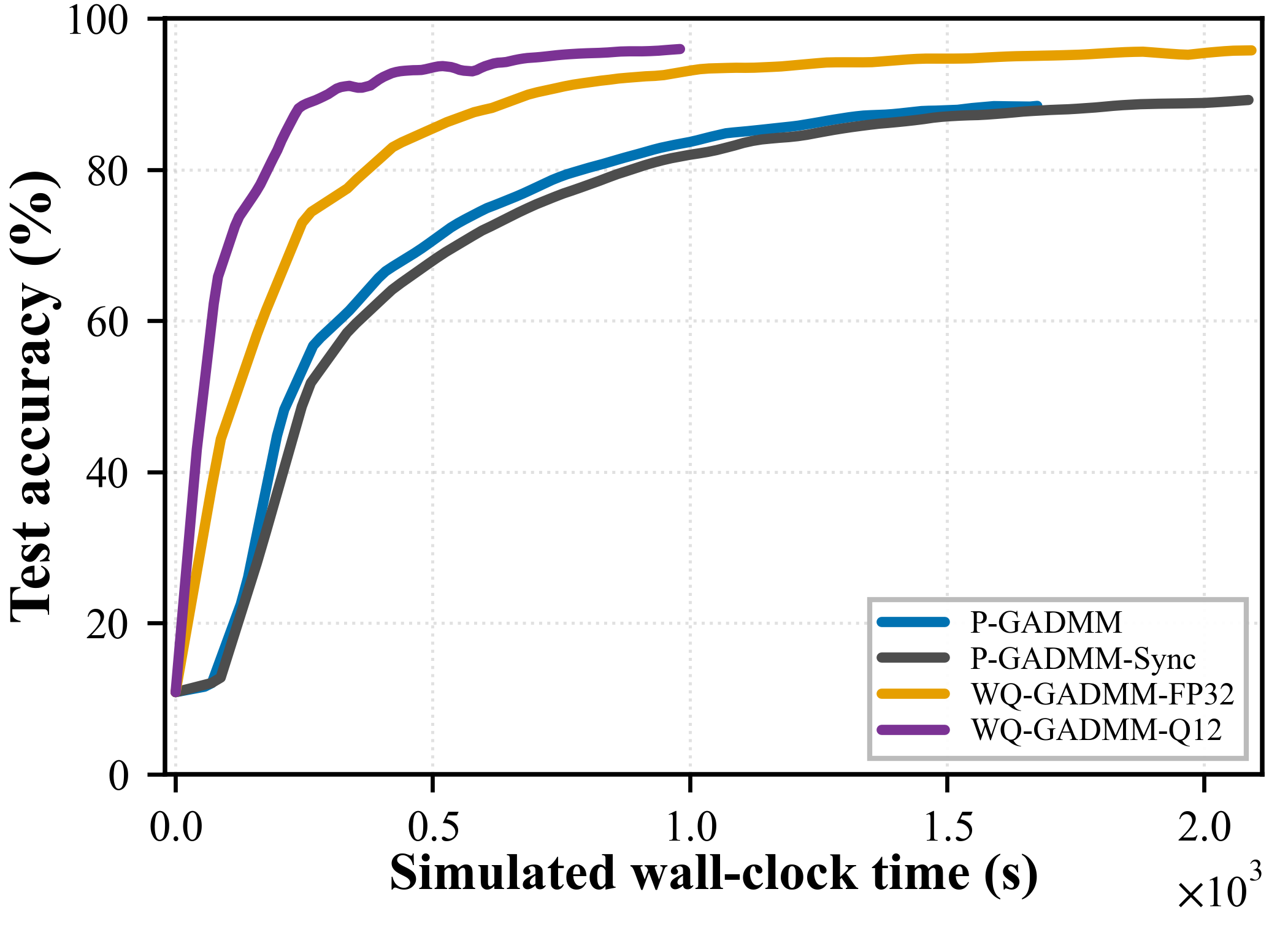}}
\hfill
\subfloat[CIFAR-10]{%
\includegraphics[width=0.50\linewidth]{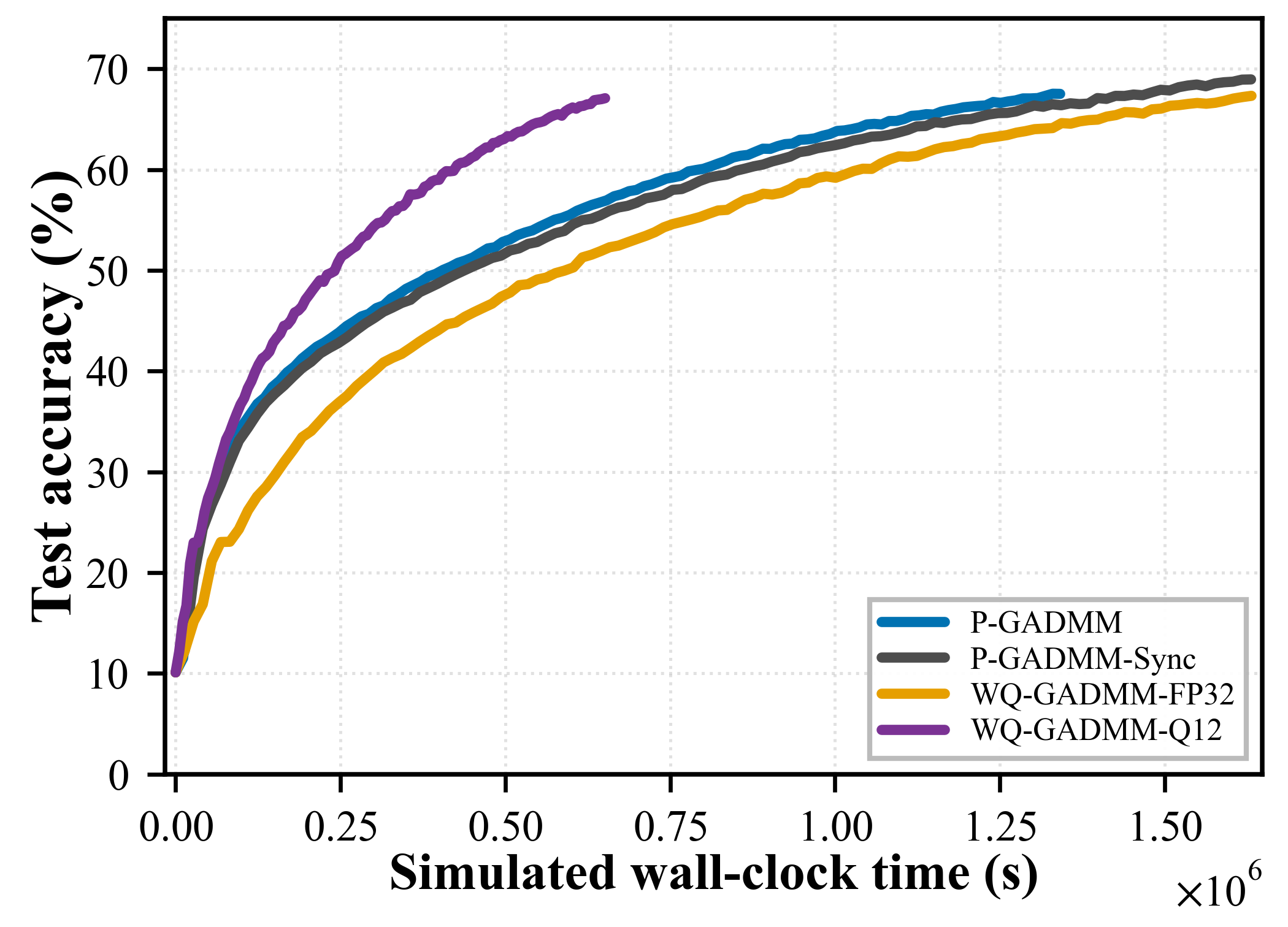}}
\caption{Test accuracy versus simulated wall-clock time.}
\label{fig:wallclock}
\end{figure}

\begin{table*}[t]
\centering
\caption{Final accuracy and resource consumption under a fixed training workload.}
\label{tab:overall}
{\footnotesize
\setlength{\tabcolsep}{9pt}
\renewcommand{\arraystretch}{1.12}
\begin{tabular}{clccc}
\toprule
Dataset & Method &
\makecell{Test accuracy\\(\%)} &
\makecell{Cumulative communication\\(Gbit)} &
\makecell{Simulated wall-clock\\time (s)} \\
\midrule
\multirow{4}{*}{MNIST}
& P-GADMM & $88.70 \pm 0.39$ & 8.469 & 1694.7 \\
& P-GADMM-Sync & $89.38 \pm 0.35$ & 8.469 & 2131.0 \\
& WQ-GADMM-FP32 & $95.86 \pm 0.40$ & 8.469 & 2113.8 \\
& WQ-GADMM-Q12 & $\mathbf{96.05 \pm 0.27}$ & \textbf{3.177} & \textbf{1002.3} \\
\midrule
\multirow{4}{*}{CIFAR-10}
& P-GADMM & $67.51 \pm 0.85$ & $7.469\times10^3$ & $1.352\times10^6$ \\
& P-GADMM-Sync & $\mathbf{68.93 \pm 0.57}$ & $7.469\times10^3$ & $1.645\times10^6$ \\
& WQ-GADMM-FP32 & $67.47 \pm 0.67$ & $7.469\times10^3$ & $1.647\times10^6$ \\
& WQ-GADMM-Q12 & $67.58 \pm 0.63$ & $\mathbf{2.801\times10^3}$ & $\mathbf{0.667\times10^6}$ \\
\bottomrule
\end{tabular}
}
\end{table*}

Table~\ref{tab:overall} compares final accuracy and resource consumption under the same training workload. Both WQ-GADMM configurations achieved higher accuracy than the baselines on MNIST, while all four methods achieved similar final accuracy on CIFAR-10. WQ-GADMM-Q12 required the lowest communication volume and shortest simulated wall-clock time on both datasets. Compared with WQ-GADMM-FP32, it reduced communication volume by 62.5\% while maintaining comparable final accuracy. Fig.~\ref{fig:gradient} shows a similar pattern during training: WQ-GADMM achieved higher accuracy at comparable numbers of client-gradient evaluations on MNIST, whereas the accuracy curves were close on CIFAR-10. Figs.~\ref{fig:communication} and \ref{fig:wallclock} show that Q12 also reached accuracy comparable to FP32 with fewer transmitted bits and less simulated time.

\subsection{Group Participation and Inter-Completion Gaps}

We evaluated whether groups participated regularly and how long they waited between completed updates. The comparison included P-GADMM, P-GADMM-Sync, and WQ-GADMM-Q12. Each observation interval contained five completed group updates, with 30 intervals for MNIST and 20 for CIFAR-10. The full-group coverage ratio was the fraction of intervals in which all five groups completed an update.

\begin{table*}[t]
\centering
\caption{Group participation and inter-completion gaps.}
\label{tab:fairness}
{\footnotesize
\setlength{\tabcolsep}{11pt}
\renewcommand{\arraystretch}{1.12}
\begin{tabular}{clccc}
\toprule
Dataset & Method &
\makecell{Jain fairness\\index} &
\makecell{Full-group\\coverage ratio} &
\makecell{Mean inter-completion\\gap (s)} \\
\midrule
\multirow{3}{*}{MNIST}
& P-GADMM & 0.963 & 0.522 & 7.03 \\
& P-GADMM-Sync & 1.000 & 1.000 & 8.53 \\
& WQ-GADMM-Q12 & 1.000 & 1.000 & \textbf{4.02} \\
\midrule
\multirow{3}{*}{CIFAR-10}
& P-GADMM & 0.965 & 0.533 & 93.28 \\
& P-GADMM-Sync & 1.000 & 1.000 & 109.70 \\
& WQ-GADMM-Q12 & 1.000 & 1.000 & \textbf{44.38} \\
\bottomrule
\end{tabular}
}
\end{table*}

\begin{figure}[t]
\centering
\subfloat[MNIST]{%
\includegraphics[width=0.49\linewidth]{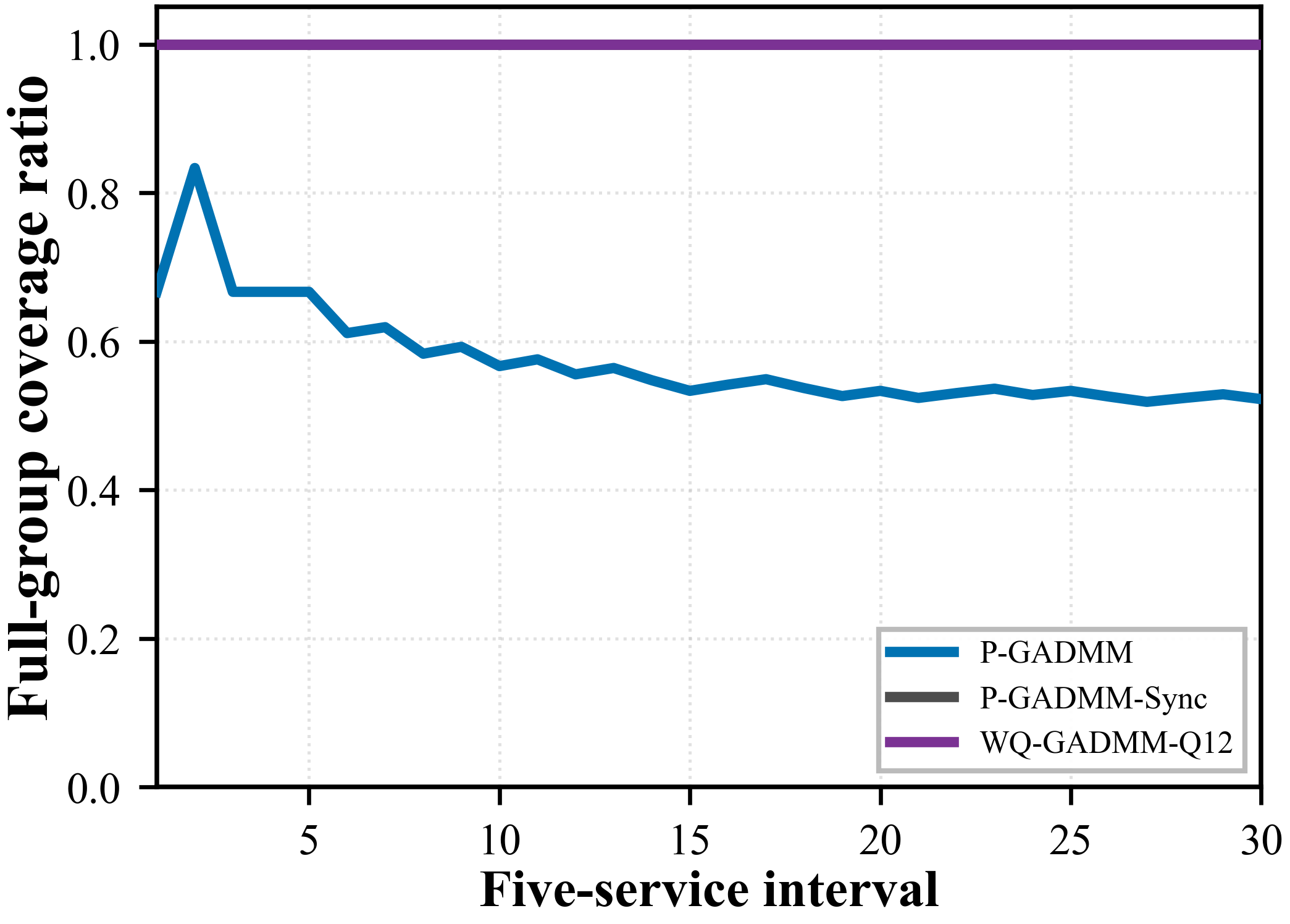}}
\hfill
\subfloat[CIFAR-10]{%
\includegraphics[width=0.49\linewidth]{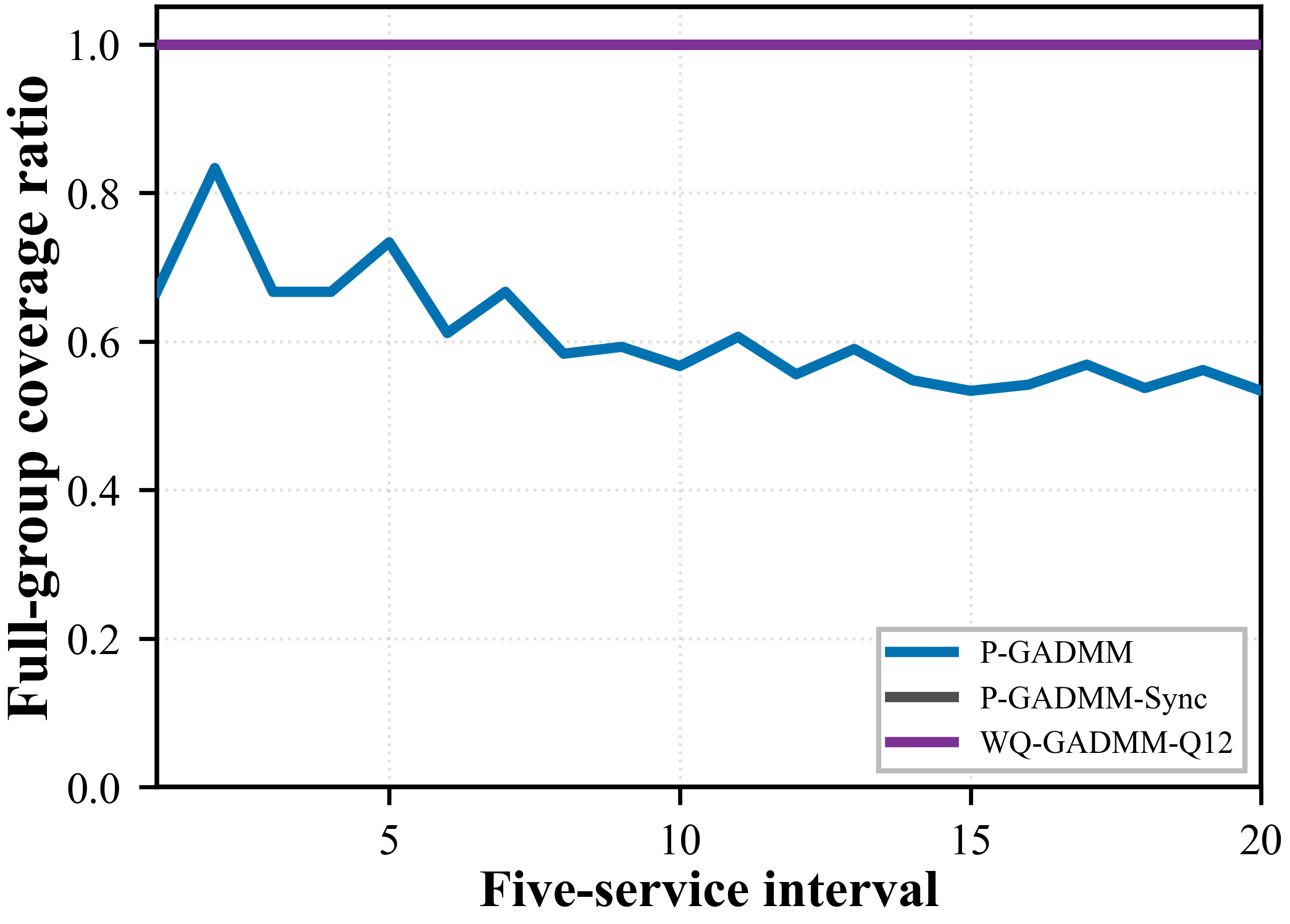}}
\caption{Cumulative full-group coverage ratio.}
\label{fig:coverage}
\end{figure}

\begin{figure}[t]
\centering
\subfloat[MNIST]{%
\includegraphics[width=0.49\linewidth]{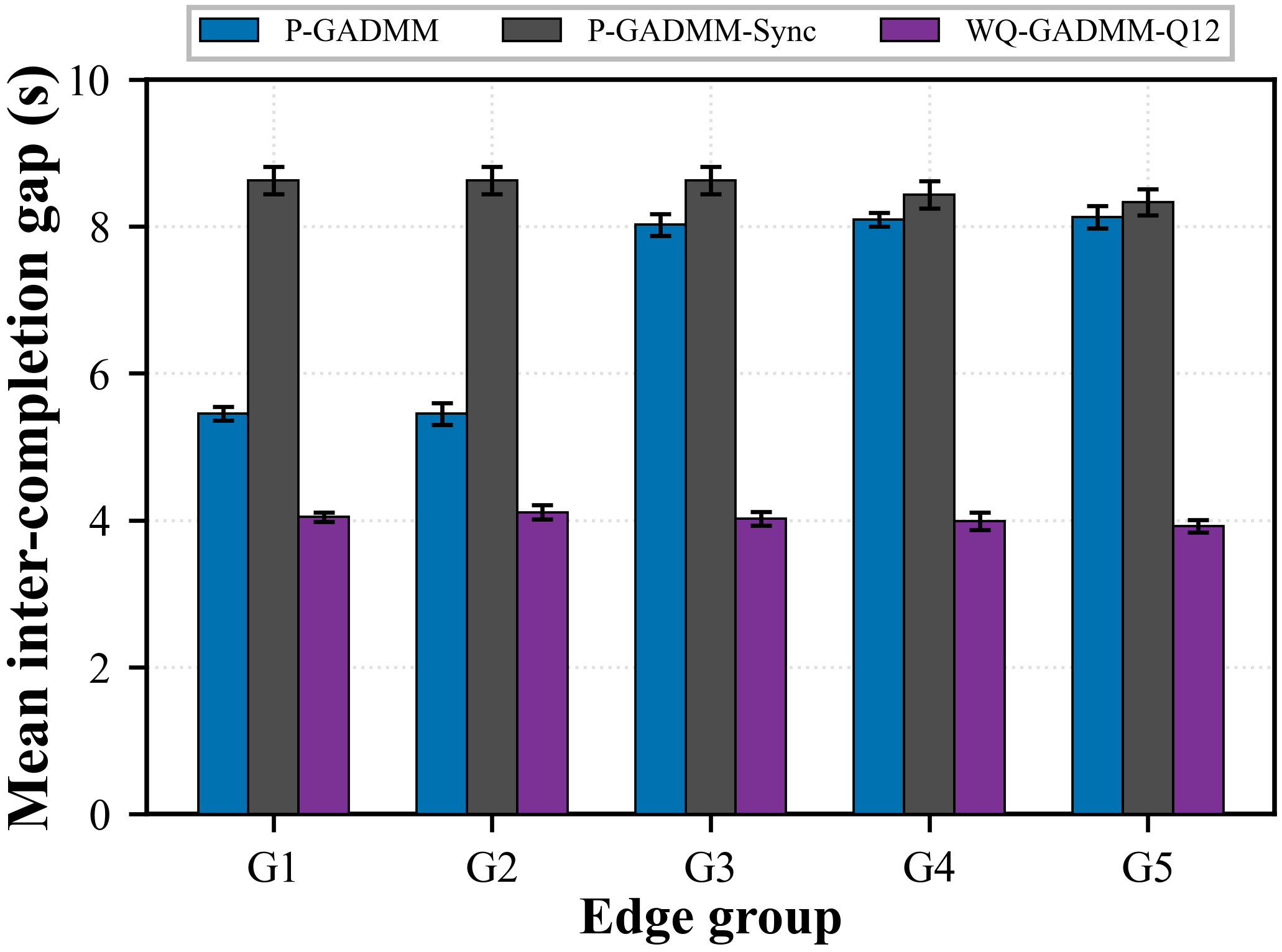}}
\hfill
\subfloat[CIFAR-10]{%
\includegraphics[width=0.49\linewidth]{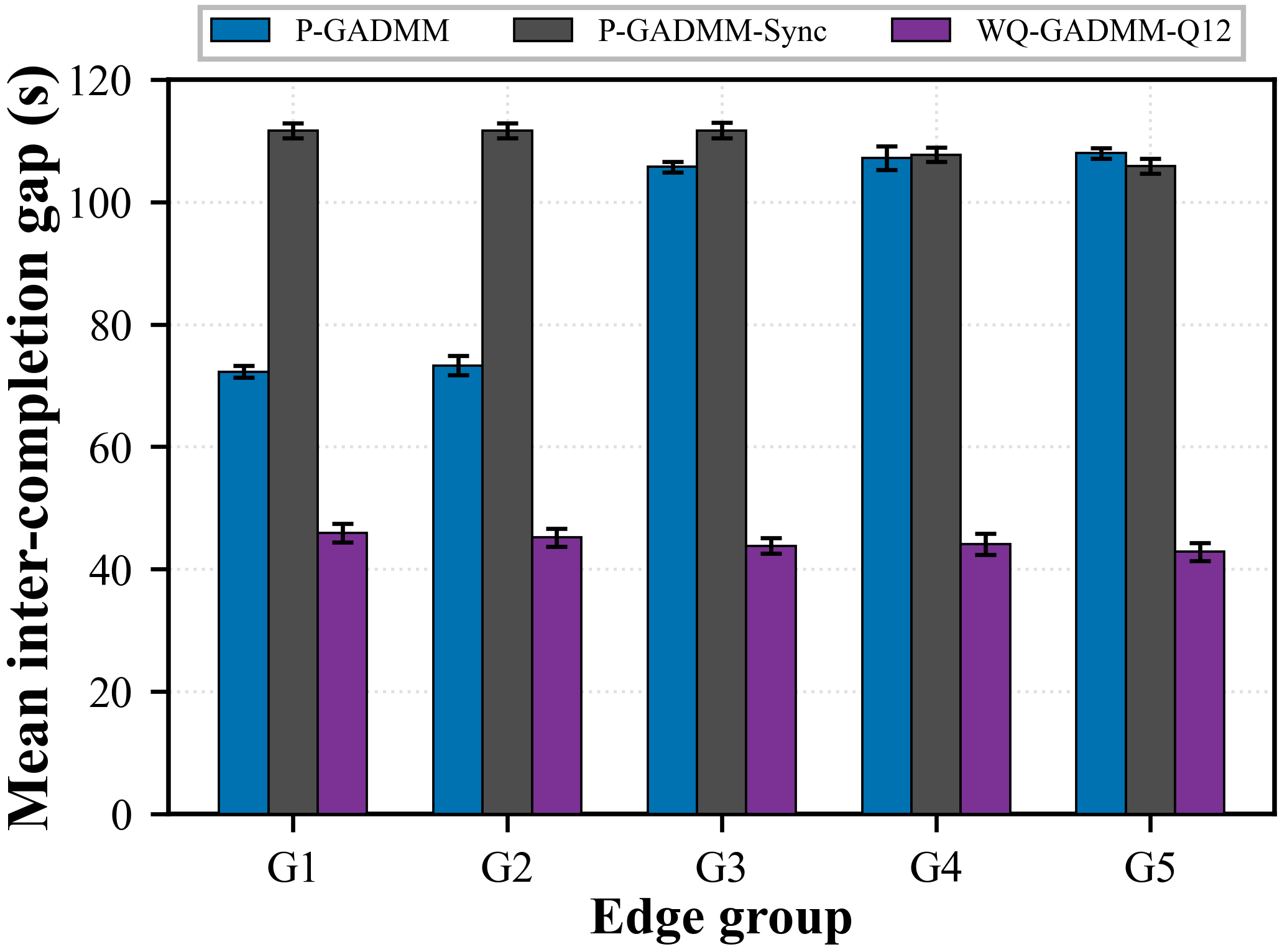}}
\caption{Mean inter-completion gap across edge groups.}
\label{fig:service_gap}
\end{figure}

\begin{table*}[t]
\centering
\caption{Accuracy and resource efficiency under communication and workload constraints.}
\label{tab:resource_efficiency}
{\footnotesize
\setlength{\tabcolsep}{8pt}
\renewcommand{\arraystretch}{1.12}
\begin{tabular}{clccc}
\toprule
& & Fixed communication budget
& \multicolumn{2}{c}{Fixed training workload} \\
\cmidrule(lr){3-3}
\cmidrule(lr){4-5}
Dataset & Method
& Accuracy (\%)
& \makecell{Communication\\ratio}
& \makecell{Simulated time\\ratio} \\
\midrule
\multirow{4}{*}{MNIST}
& P-GADMM       & $75.68 \pm 3.67$ & 1.000 & 0.802 \\
& P-GADMM-Sync  & $77.91 \pm 2.86$ & 1.000 & 1.008 \\
& WQ-GADMM-FP32 & $91.38 \pm 2.14$ & 1.000 & 1.000 \\
& WQ-GADMM-Q12  & $\mathbf{96.05 \pm 0.27}$
& \textbf{0.375} & \textbf{0.474} \\
\midrule
\multirow{4}{*}{CIFAR-10}
& P-GADMM       & $53.05 \pm 0.62$ & 1.000 & 0.821 \\
& P-GADMM-Sync  & $55.04 \pm 0.16$ & 1.000 & 0.999 \\
& WQ-GADMM-FP32 & $51.12 \pm 0.47$ & 1.000 & 1.000 \\
& WQ-GADMM-Q12  & $\mathbf{67.58 \pm 0.63}$
& \textbf{0.375} & \textbf{0.405} \\
\bottomrule
\end{tabular}

\vspace{1mm}
\parbox{0.9\textwidth}{\footnotesize
Ratios are normalized to WQ-GADMM-FP32.}
}
\end{table*}

Table~\ref{tab:fairness} shows that P-GADMM achieved Jain fairness indices above 0.96, but covered all groups in only about half of the observation intervals. P-GADMM-Sync and WQ-GADMM-Q12 both achieved fairness indices and coverage ratios of 1.000, consistent with their requirement that each group update once per window. Fig.~\ref{fig:coverage} confirms full-group coverage for both methods throughout the evaluated intervals. Their update timing nevertheless differed: Fig.~\ref{fig:service_gap} shows that WQ-GADMM-Q12 had shorter inter-completion gaps than either baseline. Compared with P-GADMM-Sync, its mean gap was reduced by 52.9\% on MNIST and 59.6\% on CIFAR-10. Thus, WQ-GADMM-Q12 maintained complete group coverage while reducing the time between successive group update completions.

\subsection{Resource Efficiency}

Finally, we compared accuracy under a common communication budget and resource consumption under a fixed training workload. The communication budgets were approximately 3.18~Gbit for MNIST and $2.801\times10^3$~Gbit for CIFAR-10. Under the fixed workload, communication volume and simulated wall-clock time were normalized to WQ-GADMM-FP32:
\[
\widetilde C_m=\frac{C_m}{C_{\mathrm{FP32}}},
\qquad
\widetilde T_{\mathrm{sim},m}
=\frac{T_{\mathrm{sim},m}}{T_{\mathrm{sim},\mathrm{FP32}}},
\]
where $C_m$ and $T_{\mathrm{sim},m}$ denote the communication volume and simulated wall-clock time of method $m$, respectively.

Table~\ref{tab:resource_efficiency} compares accuracy under a common communication budget and resource consumption under a fixed training workload. Q12 achieved higher accuracy than FP32 under the same communication budget because quantization allowed more model updates. Under the fixed workload, Q12 used 37.5\% of the FP32 communication volume, while its simulated wall-clock time was 47.4\% of FP32 on MNIST and 40.5\% on CIFAR-10. These results show that Q12 supported more training within the communication budget and reduced communication volume and simulated completion time for the same workload.

\section{Conclusion}
\label{sec:conclusion}

This paper proposed WQ-GADMM for distributed optimization in heterogeneous edge networks with limited group activation capacity. The method combines fixed grouping, windowed scheduling, and bidirectional quantization, with one update from each group before every global update. It also allows bounded model staleness and inexact local updates. Under the stated assumptions and parameter conditions, we established an average squared KKT residual bound for smooth nonconvex objectives. The bound contains an $O(1/K)$ term and a quantization-dependent error term. Experiments on MNIST and CIFAR-10 showed that WQ-GADMM-Q12 reduced communication volume and simulated training time while maintaining accuracy comparable to WQ-GADMM-FP32. It also maintained complete group coverage and achieved shorter inter-completion gaps than the evaluated baselines. Future work will address time-varying resource conditions and evaluate WQ-GADMM on physical edge platforms.

\end{document}